\documentclass[twoside,leqno]{article}


\usepackage[letterpaper,left=1in,right=1in,top=1in,bottom=1in]{geometry}
\usepackage{hyperref}
\usepackage{amsthm}
\usepackage{amsmath}
\usepackage{amsfonts}
\usepackage{tikz}
\usepackage{subcaption}
\usepackage{accents}
\usepackage{wrapfig}
\usepackage{todonotes}
\usepackage{textpos}

\pdfoutput=1


\newtheorem{theorem}{Theorem}
\newtheorem{lemma}[theorem]{Lemma}
\newtheorem{Conjecture}[theorem]{Conjecture}
\newtheorem{Definition}[theorem]{Definition}

\newcommand*{\email}[1]{\href{mailto:#1}{\nolinkurl{#1}}} 

\newcommand{\bO}{\ensuremath{O}}
\newcommand{\btO}{\ensuremath{O^*}}

\newcommand{\Q}{\ensuremath{\mathbb{Q}}}

\newcommand{\dcup}{\,\dot{\cup}\,}

\newcommand{\Ra}{\mathbf{R}}
\newcommand{\AR}{\underaccent{\sim}{\mathbf{R}}}

\newtheorem{claim}[theorem]{Claim}
\newenvironment{claimproof}{\begin{proof}}{\end{proof}}


\begin{document}


\newcommand\relatedversion{}

\title{\Large Fast Deterministic Chromatic Number under the\\{} Asymptotic Rank Conjecture\relatedversion}
\author{%
Andreas Bj\"orklund\thanks{ IT University of Copenhagen, Denmark. \email{anbjo@itu.dk}.
Supported by the VILLUM Foundation, Grant~16582.} 
\and 
Radu Curticapean\thanks{\rightskip=5cm University of Regensburg, Germany and IT University of Copenhagen, Denmark. \email{radu.curticapean@ur.de}.
Funded by the European Union (ERC, CountHom, 101077083). Views and opinions expressed are however those of the author(s) only and do not necessarily reflect those of the European Union or the European Research Council Executive Agency. Neither the European Union nor the granting authority can be held responsible for them.}%
\and 
Thore Husfeldt\thanks{Basic Algorithms Research Copenhagen and IT University of Copenhagen, Denmark. \email{thore@itu.dk}.
Supported by the VILLUM Foundation, Grant~16582.}
\and
Petteri Kaski\thanks{Aalto University, Finland. \email{petteri.kaski@aalto.fi}}
\and 
Kevin Pratt\thanks{New York University, USA. \email{kpratt@andrew.cmu.edu}}
}

\date{}
	
	\maketitle
	\begin{textblock}{5}(8.5, 8.35) \includegraphics[width=120px]{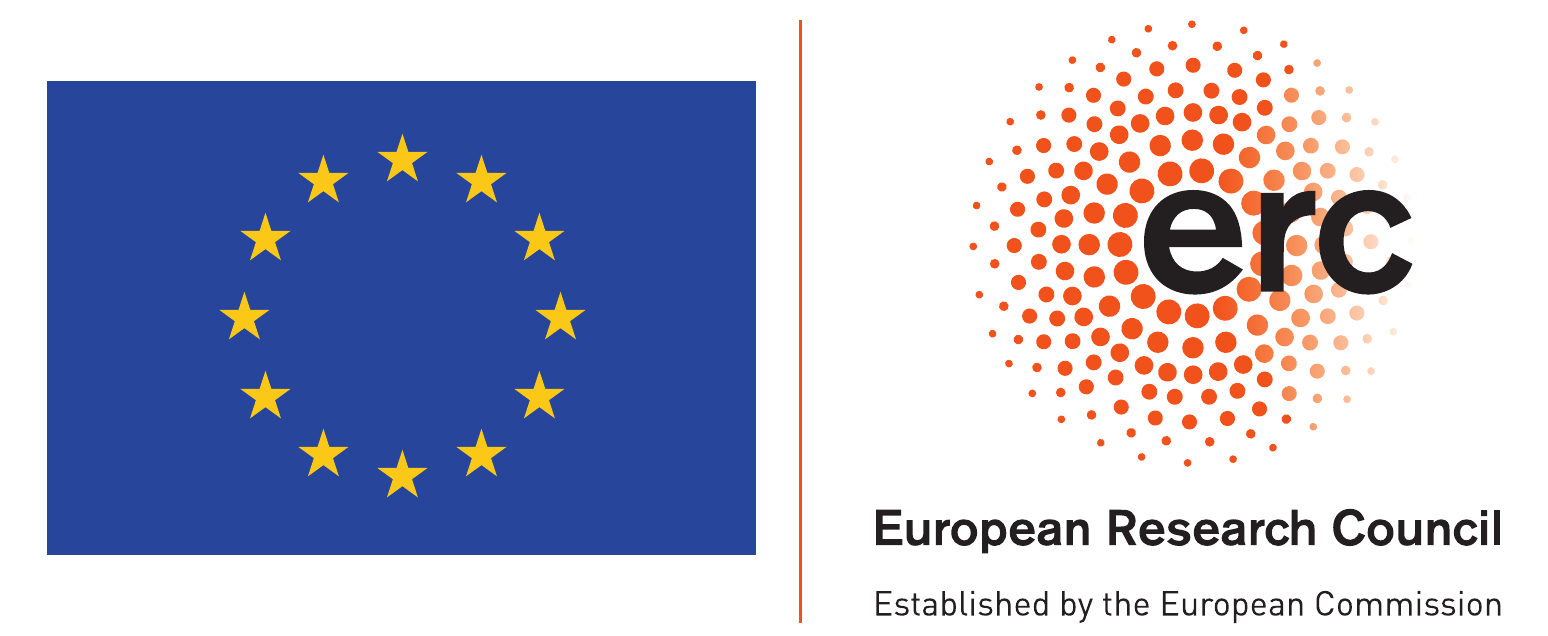} \end{textblock}

	\begin{abstract}\noindent
	In this paper we further explore the
	recently discovered connection by Bj\"orklund and Kaski [STOC~2024] and Pratt [STOC 2024] between the asymptotic rank conjecture of Strassen [Progr.~Math.~1994] and the three-way partitioning problem. We show that under the asymptotic rank conjecture, the chromatic number of an $n$-vertex graph can be computed deterministically in $\bO(1.99982^n)$ time, thus giving a conditional answer to a question of Zamir [ICALP 2021], and questioning the optimality of the $2^n\operatorname{poly}(n)$ time algorithm for chromatic number by Bj\"orklund, Husfeldt, and Koivisto [SICOMP 2009].

	Viewed in the other direction, if chromatic number indeed requires deterministic algorithms to run in close to $2^n$ time, we obtain a sequence of explicit tensors of superlinear rank, falsifying the asymptotic rank conjecture. 
	
	Our technique is a combination of earlier algorithms for detecting $k$-colorings for small $k$ and enumerating $k$-colorable subgraphs, with an extension and derandomisation of Pratt's tensor-based algorithm for balanced three-way partitioning to the unbalanced case. 
	\end{abstract}
	
	\section{Introduction}
		
	Recently, Bj\"orklund and Kaski~\cite{BjorklundK24} discovered a contention between two seemingly unrelated conjectures:
	Strassen's \emph{asymptotic rank conjecture} from the area of algebraic complexity~\cite{Strassen1991,Strassen1994}, and the \emph{set cover conjecture} from the area of fine-grained complexity. The former claims a strong upper bound on the complexity of tensors under the Kronecker product operation, and can be understood as a generalization of the conjecture that the exponent of matrix multiplication $\omega$ equals $2$. 
	The latter, on the other hand, claims a strong lower bound on the complexity of the set cover problem. Specifically, it postulates that given any $\varepsilon > 0$, there exists a $k$ such that given as input an integer $t$ and a set family $\mathcal F$ of $k$-sized subsets of $[n]$, one needs time $(2-\varepsilon)^n$ to decide whether $[n]$ can be covered with $t$ sets from $\mathcal F$. Bj\"orklund and Kaski~\cite{BjorklundK24} showed that these conjectures are inconsistent: if the asymptotic rank conjecture is true, there exists an $\varepsilon > 0$ such that for any constant set size $k$, set cover can be solved in time $(2-\varepsilon)^n$. Pratt \cite{Pratt2024} subsequently showed that this holds for any $\varepsilon < 2-3/2^{2/3} \approx 0.11$.
	
	The results of \cite{BjorklundK24,Pratt2024} are consequences of hypothetical fast algorithms for the \emph{almost-balanced three-way partitioning problem}, where we are given a universe $U$ of $n$ elements and families $\mathcal F_1,\mathcal F_2$, and $\mathcal F_3$ of subsets each of size at most $\nu n$ for $\nu$ close to $\tfrac13$ (in the case of \cite{Pratt2024}, equal to $\tfrac13$), and would like to decide if there are three pairwise disjoint $f_1\in \mathcal F_1,f_2\in \mathcal F_2,f_3\in \mathcal F_3$ such that $f_1\cup f_2 \cup f_3=U$. Throughout, we say that a set family $\mathcal F$ over a universe $U$ is $\nu$-\emph{bounded} if no set in $\mathcal F$ is larger than $\nu |U|$.
	
	Motivated by an application to graph coloring, we extend the results of \cite{BjorklundK24,Pratt2024} to the case when $\nu$ is permitted to be arbitrarily close to $1/2$. Moreover, our hypothetical algorithm is \emph{deterministic}, in contrast to the randomized algorithms of \cite{BjorklundK24,Pratt2024}. Our main algorithmic component is the following:
	\begin{theorem}[Main; Deterministic fine-grained three-way partitioning]
		\label{thm:fine-3part}
		If the asymptotic rank conjecture is true over any field of characteristic zero, 
		then for every constant $\frac{1}{3}\leq\nu<\frac{1}{2}$, and 
		every constant $\epsilon>0$, there is a deterministic algorithm that solves 
		the three-way partitioning problem over an $n$-element universe
		in $\bO\bigl(\binom{n}{\lfloor\nu n\rfloor}^{1+\epsilon}\bigr)$ time for
		$\nu$-bounded set families.
	\end{theorem}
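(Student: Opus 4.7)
My plan is to reduce the problem to $O(n^2)$ fixed-size subproblems and then apply the tensor-rank machinery of Bj\"orklund--Kaski and Pratt, extended to the unbalanced regime and derandomized. First, I would enumerate all size profiles $(s_1, s_2, s_3)$ with $s_1 + s_2 + s_3 = n$ and $s_i \le \lfloor \nu n \rfloor$; there are $O(n^2)$ of them and each profile $s = (s_1,s_2,s_3)$ can be attacked independently on the sub-families $\cF_i^{(s_i)} = \{f \in \cF_i : |f| = s_i\}$. Each sub-problem reduces to deciding whether the trilinear form
\begin{equation*}
\sigma_s(x,y,z) \;=\; \sum_{\substack{A \dcup B \dcup C = U \\ |A|=s_1,\,|B|=s_2,\,|C|=s_3}} x_A\, y_B\, z_C
\end{equation*}
is nonzero when $x, y, z$ are the $\{0,1\}$-indicator vectors of $\cF_1^{(s_1)}, \cF_2^{(s_2)}, \cF_3^{(s_3)}$. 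Because $\sigma_s$ evaluates to a nonnegative integer---the number of valid partitions---any \emph{exact} integer evaluation suffices as a deterministic decision procedure, so polynomial-identity-testing (the source of randomness in prior algorithms) is unnecessary.

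To evaluate $\sigma_s$ quickly, I would follow the template of Pratt by presenting the associated partitioning tensor as a Kronecker power of a small, constant-sized generator $g_\nu$; the asymptotic rank conjecture then guarantees that, for any $\delta > 0$, a sufficiently large constant power $g_\nu^{\otimes m}$ admits a rank decomposition within a factor $(1+\delta)$ of its largest flattening dimension. This decomposition is found by brute-force enumeration in preprocessing whose cost depends only on $m$, and it is then tensored up to cover the entire $n$-element universe. Exploiting the separable structure of the rank-$1$ summands, all inner products against $x, y, z$ can be batched so that each profile is handled in $\bO\bigl(\binom{n}{\lfloor \nu n \rfloor}^{1+\epsilon/2}\bigr)$ time; the $O(n^2)$ factor from enumerating profiles is then absorbed into the $\binom{n}{\lfloor \nu n \rfloor}^{\epsilon/2}$ slack, yielding the claimed overall bound.

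The principal obstacle is calibrating $g_\nu$ so that in the unbalanced regime $\nu \in (1/3, 1/2)$ its asymptotic rank matches $\binom{n}{\lfloor \nu n \rfloor}^{1+o(1)}$ rather than a strictly larger quantity such as $2^{n(1+o(1))}$. A naive reduction padding the instance to the balanced case blows the universe up to roughly $3\nu n$ elements and yields only the much weaker bound $\binom{3\nu n}{\nu n}^{1+\epsilon}$, which for $\nu$ close to $1/2$ already exceeds $2^n$ and is therefore useless. I would instead use a generator in which each element is marked with formal indeterminates encoding its contribution to each part-size, recovering the coefficient $[x_1^{s_1} x_2^{s_2} x_3^{s_3}]$ by multivariate polynomial interpolation over $\mathrm{poly}(n)$ evaluation points; each evaluation reduces to an ordinary rank decomposition controlled by the conjecture. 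Verifying that (i) the symbolic generator's asymptotic rank matches the target binomial coefficient and (ii) the interpolation overhead fits inside the $\binom{n}{\lfloor \nu n \rfloor}^{\epsilon}$ slack is the most delicate part of the plan.
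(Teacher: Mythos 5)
The proposal follows the paper in using a constant-size tensor generator and the asymptotic rank conjecture, and the observation that nonnegativity of the coefficients makes polynomial identity testing unnecessary (hence determinism) is correct and matches the paper's philosophy. However, there is a genuine gap: you never confront the \emph{support mismatch} problem, which is the central technical obstacle that the paper spends most of Section~4 overcoming. Concretely, there are two candidate generators $g_\nu$, and neither works with your plan. If $g_\nu$ has no per-block size constraint (so $g_\nu^{\otimes r}$ really does equal the full partitioning tensor, making your coefficient-extraction identity exact), then the leg dimension of $g_\nu$ per element is $2$, the flattening rank of $g_\nu^{\otimes n}$ is $2^n$, and the asymptotic rank conjecture can only promise a bound of $2^{n(1+o(1))}$---far above $\binom{n}{\lfloor\nu n\rfloor}^{1+\epsilon}$ for any $\nu<1/2$. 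If instead $g_\nu$ is truncated to subsets of size at most $\nu'$ per block (which does shrink the flattening and hence the conjectured rank to the right magnitude), then $g_\nu^{\otimes r}$ is strictly smaller in support than the tensor $\sigma_s$ you want: its terms correspond only to partitions $(A,B,C)$ that are \emph{block-balanced}, i.e.\ $|A\cap B_j|\lesssim \nu'|B_j|$ for every block $B_j$. Your multivariate interpolation trick tracks global part sizes $(s_1,s_2,s_3)$ and is genuinely useful for the unbalanced regime, but it cannot resurrect partitions that the truncated Kronecker power simply does not contain. You flag ``verifying (i) the symbolic generator's asymptotic rank matches the target binomial coefficient'' as the delicate part, but it is more than delicate---it is exactly where the approach as stated fails, because there is no choice of $g_\nu$ for which both the rank bound and the support equality hold simultaneously.

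The paper's resolution is a ``break and randomize'' scheme (Karppa--Kaski), derandomized via pairwise independence: it constructs an explicit, polynomial-size \emph{balancing family} of block partitions of $[n]$ (Lemma~\ref{lem: construct-balancer}), with the guarantee that for \emph{any} $\nu$-bounded tripartition $(A_1,A_2,A_3)$, at least one block partition in the family makes all three sets $(\delta,r)$-balanced onto every block except a small ``fault block'' $B_0$; the fault block is then handled by brute-forcing all $3^{|B_0|}$ restricted partitions. Only after this reduction does the Kronecker/Yates machinery apply, and then to a \emph{direct sum} of constant-size partitioning tensors across all plausible block sizes (Lemma~\ref{lem: block-balanced partitioning}), not to a power of a single $g_\nu$. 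Your proposal contains none of this: no hash family, no fault block, no reduction to block-balanced inputs. Without some such mechanism the running time either degrades to $2^{n(1+o(1))}$ or the algorithm is incorrect, and the interpolation by itself does not close the gap.
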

	Since the input families may have size $\binom{n}{\lfloor \nu n \rfloor}$, Theorem~\ref{thm:fine-3part} is close to best possible.
	Some consequences of previous work are immediate:
	With $\nu = \frac{1}{3}$, Theorem~\ref{thm:fine-3part} yields a deterministic version of the three-way partitioning algorithm of~\cite{Pratt2024}. It then follows from the constructions in \cite{BjorklundK24,Pratt2024} that if the asymptotic rank conjecture is true, there are deterministic algorithms with running time $\bO(1.89^n)$ for the following two problems:
	\begin{enumerate}
	  \item set cover for a universe of $n$~elements, provided that the subsets have constant size,
	  \item detecting a Hamiltonian cycle in an $n$-vertex directed graph.
	\end{enumerate}
	The derandomisation is particularly interesting for the second problem, 
	because finding a deterministic $(2-\epsilon)^n$-time algorithm for Hamiltonicity is an outstanding open problem, even in undirected graphs.

	\medskip
	We turn our attention to set cover instances where the subsets can have more than a constant number of elements.
	As anticipated by a remark in \cite{BjorklundK24}, 
	we combine Theorem~\ref{thm:fine-3part} with ideas that are implicit in previous work by Bj\"orklund, Husfeldt, Kaski, and Koivisto~\cite{BjorklundHKK10}  on listing $k$-coverable subsets.
	The result is a deterministic set cover algorithm where the subset size can be arbitrarily close to~$\frac14n$:
	
	\begin{theorem}[Set cover with large subsets under the asymptotic rank conjecture]
		\label{thm: setcover}
		Let $\delta < \frac14$.
		If the asymptotic rank conjecture is true over a field of characteristic zero, then any set cover instance $([n],\mathcal F,t)$ where  $\mathcal F$ is $\delta$-bounded, can be solved deterministically in $\btO((2-\epsilon)^n)$ time, for some $\epsilon>0$.
	\end{theorem}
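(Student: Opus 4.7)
The plan is to reduce set cover with $\delta$-bounded families to three-way partitioning via a structural decomposition of any cover, and then invoke Theorem~\ref{thm:fine-3part}. Any cover $S_1,\ldots,S_t \in \mathcal{F}$ of $[n]$ induces a partition $P_i := S_i \setminus \bigcup_{j<i} S_j$ with $|P_i| \leq \delta n$ and $\sum_i |P_i| = n$. Running Longest-Processing-Time-First (LPT) scheduling on the item sizes $|P_1|,\ldots,|P_t|$ with three machines, the standard LPT analysis bounds each bin's final load by $(n + 2 x_{\mathrm{last}})/3 \leq (1+2\delta)n/3$, where $x_{\mathrm{last}}$ is the last item added to the heaviest bin. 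For $\delta < \tfrac{1}{4}$ this is strictly less than $n/2$, so we fix any $\nu \in \bigl((1+2\delta)/3,\,\tfrac{1}{2}\bigr)$. Setting $Q_j := \bigcup_{i \in G_j} P_i$ for the LPT bins $G_j$ yields a partition $[n] = Q_1 \sqcup Q_2 \sqcup Q_3$ with $|Q_j| \leq \nu n$ and each $Q_j$ being $|G_j|$-coverable by $\mathcal{F}$, where $|G_1|+|G_2|+|G_3|=t$.

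This motivates the algorithm: for each triple $(k_1, k_2, k_3)$ of nonnegative integers with $k_1+k_2+k_3 \leq t$ (only $\mathrm{poly}(n)$ choices), enumerate the family $\mathcal{H}_j$ of $k_j$-coverable subsets of $[n]$ of size at most $\lfloor \nu n \rfloor$ using a size-trimmed subset convolution in the style of~\cite{BjorklundHKK10}. Then apply Theorem~\ref{thm:fine-3part} to the $\nu$-bounded triple $(\mathcal{H}_1, \mathcal{H}_2, \mathcal{H}_3)$. By the structural observation, some iteration succeeds if and only if a cover of size at most $t$ exists. The trimmed convolution produces each $\mathcal{H}_j$ in $\btO\bigl(\binom{n}{\lfloor \nu n \rfloor}\bigr)$ time, and Theorem~\ref{thm:fine-3part} contributes $\binom{n}{\lfloor \nu n \rfloor}^{1+\epsilon}$ time per iteration. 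Since $\nu < \tfrac{1}{2}$ forces the binary entropy to satisfy $H(\nu) < 1$, we have $\binom{n}{\lfloor \nu n \rfloor} = 2^{H(\nu)n + o(n)}$; choosing $\epsilon$ small enough that $H(\nu)(1+\epsilon) < 1$ gives the claimed $\btO((2-\epsilon')^n)$ total running time for some $\epsilon' > 0$.

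The main obstacle is ensuring that the enumeration of $k_j$-coverable subsets does not blow up by a factor of $|\mathcal{F}|^{k_j}$ from naively iterating over $k_j$-tuples of sets; this forces the use of size-trimmed subset convolution rather than a plain DP over $\mathcal{F}$. Propagating only subsets of size at most $\lfloor \nu n \rfloor$ across each level of convolution keeps the per-level time bounded in terms of the threshold $\nu n$, independent of $k_j$ and of $|\mathcal{F}|$, and the whole computation stays within the $\btO\bigl(\binom{n}{\lfloor \nu n \rfloor}\bigr)$ budget that Theorem~\ref{thm:fine-3part} can absorb.
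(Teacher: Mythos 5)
Your proposal is correct and follows essentially the same path as the paper: show that any $t$-cover from a $\delta$-bounded family with $\delta<\tfrac14$ is $\nu$-balanced for some $\nu<\tfrac12$, then iterate over all triples $(k_1,k_2,k_3)$, list the $\nu$-bounded $k_j$-coverable subsets via the transform-based enumeration of~\cite{BjorklundHKK10}, and feed the resulting families into Theorem~\ref{thm:fine-3part}. The paper packages the algorithmic half of this into Lemma~\ref{lem:balancedkcover} and only needs to verify the structural balancing claim in the proof of Theorem~\ref{thm: setcover}; you re-derive the mechanism inline, but it is the same machinery. The one genuine (though minor) variation is the balancing argument itself: you run LPT on the disjointified pieces $P_i=S_i\setminus\bigcup_{j<i}S_j$, giving the bound $(1+2\delta)n/3$ per bin, while the paper uses an in-order greedy fill and sets $\kappa=(\tfrac12-2\delta)/3$, which yields exactly the same threshold $\nu=(1+2\delta)/3$. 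Both are valid and neither is simpler in any meaningful way.
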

	The corresponding earlier result \cite[Theorem 4]{BjorklundK24}, establishes this (with a randomized algorithm) for much smaller subsets;
	the admittedly coarse estimate in \cite[p.~14]{BjorklundK24} has $\delta=10^{-7}$.

	\medskip
	However, our main (and guiding) application is graph coloring.
	Unlike Theorems~\ref{thm:fine-3part} and \ref{thm: setcover},
	our coloring result makes no assumptions about subset size.

	To be precise, the $k$-coloring problem is to find a proper coloring of the vertices of an undirected graph on $n$~vertices using at most $k$~colors.
	The smallest $k$ for which such a $k$-coloring exists is the graph's \emph{chromatic number}.
	Graph coloring is a set cover problem, where the universe consists of the graph's vertices, and the implicitly provided set family are the independent sets.
	It is therefore immediate from Theorem~\ref{thm: setcover} that there is a $(2-\epsilon)^n$-time algorithm for chromatic number when each independent set contains sufficiently less than $\frac14n$~vertices.
	However, we show that the size constraint can be avoided.
	The result is an algorithm for the general chromatic number problem:

	\begin{theorem}[Chromatic number under the asymptotic rank conjecture]
		\label{thm: chromaticnumber}
		If the asymptotic rank conjecture is true over a field of characteristic zero, then the chromatic number of an $n$-vertex graph can be computed deterministically in $\bO(1.99982^n)$ time. 
	\end{theorem}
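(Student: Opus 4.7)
The plan is to decide $\chi(G) \leq k$ for each $k$ from $1$ up to $n$, returning the smallest $k$ that succeeds; each test should run in $(2-\epsilon_0)^n$ time with $2-\epsilon_0 \leq 1.99982$. The main workhorse will be the three-way decomposition: $\chi(G) \leq k$ if and only if $V$ admits a disjoint partition $V = A_1 \dcup A_2 \dcup A_3$ with $\chi(A_i) \leq k_i$ and $k_1+k_2+k_3=k$. For each triple $(k_1,k_2,k_3)$ we would enumerate the families $\cF_i$ of $k_i$-colorable induced subsets of $V$ of size at most $\nu n$ for a fixed $\nu<1/2$, using the enumeration of $k$-coverable subsets implicit in~\cite{BjorklundHKK10}, and then invoke Theorem~\ref{thm:fine-3part} to decide whether some $(A_1,A_2,A_3)\in\cF_1\times\cF_2\times\cF_3$ partitions $V$. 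The cost is $\binom{n}{\lfloor \nu n\rfloor}^{1+\epsilon}$, which is $(2-\epsilon')^n$ for some $\epsilon'>0$.

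Small $k$ must be handled separately, since a balanced three-way split into $k_i$-colorable parts of size at most $\nu n$ need not exist when $k$ is tiny. For $k\le 4$ there are classical algorithms well below $(2-\epsilon)^n$; for $k\in\{5,6\}$ we would invoke Zamir's ICALP 2021 result; for a few larger constant values of $k$ we use ad-hoc variants of the three-way partitioning above.

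The principal obstacle is that the three-way partitioning approach only detects $k$-colorings admitting a balanced partition into parts of size at most $\nu n$. By a bin-packing argument, such a partition exists whenever some optimal $k$-coloring has every color class bounded by a threshold that (for $\nu$ close to $1/2$) is on the order of $n/4$. For graphs whose every optimal $k$-coloring contains a larger color class, $G$ necessarily has an independent set exceeding this threshold, and we would peel off such a class by recursion: enumerate independent sets $I$ of size above the threshold using BHKK-style listing, and recursively test $\chi(G-I)\leq k-1$ on the smaller graph.

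The delicate balance is in choosing the threshold so that everything fits within $(2-\epsilon_0)^n$. The cutoff separating ``small enough maximum color class'' (handled by three-way partitioning) from ``large independent set'' (handled by recursive peeling) must be small enough for the bin-packing argument in the first case, yet large enough that enumerating independent sets above the cutoff in the second case is strictly sub-$2^n$. The $\epsilon$-slack in Theorem~\ref{thm:fine-3part} must absorb the overhead from handling small $k$ separately, and the recursion in the large-independent-set case must not cascade into an unmanageable branching factor. Optimizing over these trade-offs is the main technical difficulty and is what yields the stated $1.99982^n$ running time, a small but nonzero improvement over the Bj\"orklund--Husfeldt--Koivisto $2^n$ bound.
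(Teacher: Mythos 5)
Your overall framing---iterate $k$, use three-way partitioning via enumeration of $k_i$-coverable subsets as the workhorse, and handle exceptional size profiles separately---matches the paper in spirit. Lemma~\ref{lem: 3waykcol} in the paper is exactly the step you call ``enumerate the families $\mathcal F_i$ of $k_i$-colorable subsets and invoke Theorem~\ref{thm:fine-3part}.''

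The genuine gap is in how you handle colorings that are not ``balanceable.'' You propose a single threshold and a recursion: if some color class exceeds the threshold, peel it off and recurse on $G-I$ with $k-1$ colors. This dichotomy is too coarse, and the recursion does not obviously close. First, the threshold cannot be pushed near $n/2$: the complete $4$-partite graph $K_{n/4,n/4,n/4,n/4}$ has all color classes of size $n/4$, and any greedy packing of four equal quarters into three bins puts $n/2$ into one bin, so no $\nu$-balanced tripartition with $\nu<1/2$ exists even though every class is only $n/4$. So the bin-packing condition genuinely requires a bound on $s_4$, not just on $s_1$. Second, if the threshold is lowered to something like $n/4$ so that bin-packing works, then ``peel and recurse'' faces an exponential branching issue: at each level you enumerate independent sets above the threshold, recurse on a graph that shrank by a constant fraction, and re-enumerate; a few levels of this easily overshoots $2^n$. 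You flag this trade-off as ``the main technical difficulty'' but do not resolve it, and it is not clear it can be resolved with a single-threshold recursion.

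The paper avoids recursion entirely. It normalizes so that $S_1$ is a \emph{maximal} independent set (Lemma~\ref{lem:S1mis}), then runs a non-recursive five-way case analysis on the size profile $(s_1,\ldots,s_k)$ (Lemma~\ref{lem:cases}): (A) $k\leq 3$, handled directly; (B) $s_1+s_2+s_3+s_4>n-6d$, i.e.\ the top four classes exhaust almost everything---here one enumerates $(k-4)$-colorable complements of size $\leq 6d$ and runs a fast deterministic $4$-coloring algorithm on the rest (this is what handles $K_{n/4,n/4,n/4,n/4}$); (C) $s_1+s_2>\tfrac12 n + d$, handled by enumerating $(k-2)$-colorable sets of size $\leq \tfrac12 n-d$ and a bipartiteness check; (D) the lopsided case $s_1\approx \tfrac12 n$ and $s_2<2d$, handled by listing maximal independent sets of size near $n/2$ (there are only about $2^{n/2}$ of them by the $2^{n-\alpha}$ bound) and then a \emph{single} three-way partitioning on the remainder; (E) the semi-balanced case $s_1\leq \tfrac12 n-d$ and $s_1+\cdots+s_4\leq n-6d$, where the greedy tripartition argument (Lemma~\ref{lem: semi-balanced}) succeeds and three-way partitioning applies directly. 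Note also that the paper needs only deterministic $3$- and $4$-coloring subroutines; the appeal to Zamir's $5$- and $6$-coloring results is unnecessary, since cases B and C absorb small $k$ automatically. The $1.99982^n$ figure comes from balancing $d=\delta n$ across these specific cases, not from optimizing a recursion threshold.
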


	The best known general algorithm for $k$-coloring and chromatic number runs in $\btO(2^n)$; see Bj\"orklund, Husfeldt, and Koivisto~\cite{BjorklundHK2009}. 
	Zamir~\cite{Zamir21} posed as an open problem the question of whether there are $\bO((2-\epsilon_k)^n)$-time algorithms for $k$-coloring for every $k$, where $\epsilon_k>0$ is a constant depending only on $k$. He proved that this is indeed the case for $k\leq 6$, extending earlier results for $k\leq 4$.
	In a subsequent paper, Zamir~\cite{Zamir23} showed that such algorithms for deciding $k$-colorability exist for any fixed~$k$, provided that the graph is \emph{regular} or very close to regular. Yet, we do not know whether $7$-coloring in a general graph, nor if chromatic number even in a regular graph, can be computed in $\bO((2-\epsilon)^n)$ time. 
	Theorem~\ref{thm: chromaticnumber} establishes that such algorithms \emph{do} exist under the asymptotic rank conjecture.
	
	It is crucial for our construction behind Theorem~\ref{thm: chromaticnumber} that we have the freedom to choose $\nu$ close to $\tfrac12$ in Theorem~\ref{thm:fine-3part}. 
	This helps us detect a $k$-coloring that is balanced enough to admit a partitioning of the color classes into three parts each sufficiently smaller than $\frac12 n$ in total size.
	To cope with $k$-colorings that are not balanced in the above sense,
	we also need some $k$-coloring algorithms from the literature to complete the construction.
	To be concrete, we use the deterministic $4$-coloring algorithm by Fomin, Gaspers, and Saurabh~\cite{Fomin2007}.
	However, any deterministic algorithm for $4$-coloring running in time $\bO((2-\epsilon_4)^n)$ for some $\epsilon_4>0$ would suffice to establish a version of Theorem~\ref{thm: chromaticnumber} with a $(2-\epsilon)^n$-time bound.
	
	\subsection{Strassen's Asymptotic Rank Conjecture}
	\label{sec: intro-ARC}
	Our Theorem \ref{thm:fine-3part} is based on a hypothetical fast algorithm for evaluating certain trilinear forms, which we show exists under Strassen's asymptotic rank conjecture. We now give a brief introduction to this conjecture and its algorithmic importance, starting with standard preliminaries and conventions on tensors.
	
	Throughout this paper we work over the field $\Q$ of rationals. A trilinear form, or for short, a \emph{tensor}, is a multilinear map $T : \Q^n \times \Q^n \times \Q^n \to \Q$. We view $T$ concretely as a degree 3 polynomial in 3 disjoint sets of variables, where each monomial is a product of one variable from each variable set:
	\[T = \sum_{i,j,k \in [n]} a_{ijk} X_iY_jZ_k,\]
	for coefficients $a_{ijk} \in \Q$.
	
	We say that a tensor $T$ has rank one if $T = ( \sum a_i X_i )(\sum b_i Y_i)(\sum c_iZ_i)$ for some $a,b,c \in \Q^n$. The \emph{tensor rank} of $T$, denoted $\Ra(T)$, is the minimum number $r$ such that $T$ can be expressed as an $\Q$-linear combination of $r$ rank-one tensors. The \emph{Kronecker product} of $T = \sum_{i,j,k \in [n]} a_{ijk}X_iY_jZ_k $ and $T' = \sum_{i,j,k \in [n']} b_{ijk}X_iY_jZ_k$ is the trilinear form $T \otimes T' : \Q^{nn'} \times \Q^{nn'} \times \Q^{nn'} \to \Q$ given by
	\[T \otimes T' := \sum_{\substack{i,j,k\in [n]\\i',j',k'\in[n']}} a_{ijk}b_{i'j'k'}X_{i,i'}Y_{j,j'}Z_{k,k'}.\]
	We use the shorthand $T^{\otimes r}$ to denote the Kronecker product of $r$ copies of $T$.
	
	Tensor rank is submultiplicative under Kronecker product: $\Ra(T \otimes T') \le \Ra(T) \Ra(T')$. As a result, Fekete's subadditivity lemma implies that the \emph{asymptotic rank}~\cite{Gartenberg1985} of $T$,
	\[\AR(T) = \lim_{r \to \infty} \Ra(T^{\otimes r})^{1/r},\]
	is well-defined; see e.g.~\cite{WigdersonZ2023}. The algorithmic importance of this quantity is that the number of arithmetic operations needed to evaluate $T^{\otimes r}$ equals $\AR(T)^{r(1+ o(1))}$. This is folklore, but for completeness we provide a proof in Lemma \ref{lem:trilinear-arc}.
	
	Asymptotic rank arises naturally in the study of algorithms for fast matrix multiplication. Let $M_n = \sum_{i,j,k \in [n]} X_{ij}Y_{jk}Z_{ki}$ be the $n \times n$ \emph{matrix multiplication tensor}. Strassen showed that the exponent of matrix multiplication $\omega$ satisfies $\omega = \lim_{n \to \infty} \log_{2^n} \Ra(M_{2^n})$~\cite{Strassen1986,Strassen1988}. Moreover, these tensors obey the simple but fundamental identity $M_n \otimes M_m = M_{nm}$; in particular, $M_{2^r} = M_2^{\otimes r}$. Combining these two facts it follows that $\log_2 \AR(M_2) = \omega$. Understanding $\omega$ is thus equivalent to understanding the asymptotic rank of the constant-size tensor $M_2$.
	
	\begin{Conjecture}[Asymptotic rank conjecture~\cite{Strassen1994,BurgisserCS2013}
		\footnote{Strassen's original conjecture~\cite[Conjecture~5.3]{Strassen1994} assumes that the tensor $T$ is \emph{tight} in the sense that for some choice of 
			bases of $\Q^n$ there exist injective functions 
			$f, g, h : [n] \to \mathbb{Z}$ such that $f(i) + g(j) + h(k) = 0$ 
			for all monomials $X_iY_jZ_k$ in the support of $T$. More recent versions (cf.~B\"urgisser, Clausen, and Shokrollahi~\cite[Problem~15.5]{BurgisserCS2013}; 
			also e.g.~Conner, Gesmundo, Landsberg, Ventura, and 
			Wang~\cite[Conjecture~1.4]{ConnerGLVW2021} 
			as well as Wigderson and Zuiddam~\cite[Section 13, p.~122]{WigdersonZ2023}) 
			of the conjecture omit the tightness assumption. We state the conjecture without the tightness assumption but with the understanding that the tensors to which we apply the conjecture in this work are tight.}{}]
		For every tensor $T : \Q^n \times \Q^n \times \Q^n \to \Q$, we have $\AR(T) \le n$.
	\end{Conjecture}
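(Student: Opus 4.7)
The statement is Strassen's asymptotic rank conjecture, which the paper adopts as a hypothesis rather than as a theorem it will prove. I want to be upfront that a genuine proof would be a major breakthrough in algebraic complexity, subsuming in particular the conjecture $\omega = 2$, and that none of the approaches I can outline is known to come close. With that caveat, here is the plan I would nevertheless try to pursue.

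The starting reduction I would attempt is to pass to structured ``universal'' tensors. By Strassen's identity $M_n \otimes M_m = M_{nm}$, the conjecture applied to $M_2$ already forces $\omega = 2$, so even that single case is an outstanding open problem. More generally, one would like to embed an arbitrary tensor $T \in \Q^n \otimes \Q^n \otimes \Q^n$, by restriction or degeneration, into a tensor drawn from a family whose asymptotic rank one has some hope of controlling --- canonically, the small and big Coppersmith--Winograd tensors, or Strassen's tight tensors in the sense of the footnote above. Using submultiplicativity of $\AR$ under Kronecker product together with monotonicity under restriction, one then hopes to bound $\AR(T)$ from above by $\AR$ of a universal tensor of matching format. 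The upper bound on $\AR$ of the universal tensor would be attempted via the laser method and its refinements, augmented by the theory of the asymptotic spectrum of tensors (Strassen; Wigderson--Zuiddam), which represents $\AR$ as a supremum of multiplicative monotones over a compact spectral set.

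The hard part, and the reason the conjecture has resisted attack for three decades, is the matching lower-bound side. All known quantitative lower bounds on asymptotic subrank are controlled by slice rank and its relatives (Tao; Sawin; Naslund--Sawin), and barrier results of Alman--Vassilevska Williams and of Christandl--Vrana--Zuiddam show that such techniques cannot in isolation yield $\omega = 2$, let alone the stronger bound $\AR(T) \le n$ for arbitrary $T$. Producing a genuinely new multiplicative monotone on tensors --- that is, a new point in the asymptotic spectrum --- or else sidestepping the spectrum entirely, is the obstacle at which any honest attempt I could sketch would terminate. For that reason I would not actually undertake the proof; the paper treats the conjecture as a hypothesis, and that is the right stance.
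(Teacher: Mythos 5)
You have correctly recognized that this is a conjecture the paper adopts as a hypothesis, not a statement it proves; there is no proof in the paper to compare against, and declining to ``prove'' it is exactly the right call. Your survey of why the conjecture is hard (that it subsumes $\omega=2$, the role of the asymptotic spectrum, the slice-rank barrier results) is accurate background but beside the point here --- the only thing to check was whether you noticed the statement is a hypothesis, and you did.
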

	
	This generalizes the conjecture that $\omega = 2$, as it would imply that $\AR(M_2) \le 4$, so $\omega = 2$. We mention a couple of pieces of evidence in favor of this conjecture:
	\begin{enumerate}
		\item For any tensor $T$, Strassen showed that $\AR(T) \le n^{2\omega/3} < n^{1.6}$~\cite{Strassen1988}
		\footnote{The result is implicit in the proof of Proposition 3.6 in \cite{Strassen1988}; cf.~also \cite[Proposition~2.12]{ChristandlVZ2021} and \cite[Remark~2.1]{ConnerGLV2022}.}{}, whereas ``typical" tensors have rank $\Theta(n^2)$; e.g.~\cite{Landsberg2012}. Thus Kronecker-structured tensors do have significantly lower rank than most.
		
		\item The analogue of the asymptotic rank conjecture for bilinear forms is the statement that multiplication of a Kronecker-structured matrix with a vector can be done in nearly linear time. This is true by e.g.~Yates's algorithm~\cite{Yates1937}.
	\end{enumerate}
	
	\begin{figure}
                \begin{center}
		\includegraphics[width=12cm]{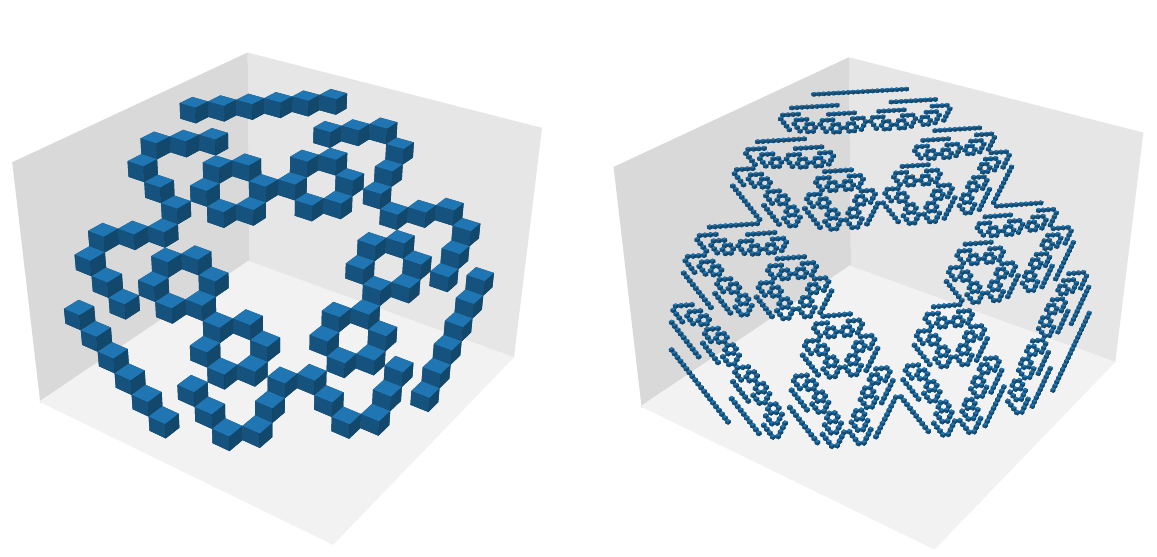}
                \end{center}
		\caption{The partitioning tensors $T_{1/3,\,6}$ and $T_{1/3,\,9}$.}
		\label{fig: partition}
	\end{figure}
	Of interest to us will be (a slight variant of) the $\tau$-\emph{bounded partitioning tensor}
	\[
	T_{\tau,k}=\sum_{\substack{I,J,K\subseteq [k]\\|I|,|J|,|K|\leq \lceil\tau k\rceil\\I\dcup J\dcup K=[k]}}X_IY_JZ_K.\,
	\]
	See Figure~\ref{fig: partition} for an illustration. Notice that the evaluation of $T_{\tau,k}$ at $X,Y$, and $Z$ being the indicator vector of a set family $\mathcal{F}$ equals the number of $\tau$-bounded tripartitions of $\mathcal{F}$. So if it were the case that $T_{\tau,k}^{\otimes r} = T_{\tau,rk}$, analogous to what happened in the case of the matrix multiplication tensors, we would be able to evaluate $T_{\tau,k}$ with $(\sum_{i=0}^{\tau k} \binom{k}{i})^{1+\varepsilon}$ operations, and Theorem~\ref{thm:fine-3part} would immediately follow. Unfortunately this is not the case, as the support of $T_{\tau,k}^{\otimes r}$ is a strict subset of that of $T_{\tau, rk}$. Nevertheless, we show how to repair this deficiency using the ``breaking and randomizing" idea of Karppa and Kaski~\cite{karppa2019probabilistic}, which we additionally modify in a way that makes the ``randomizing'' part amenable to derandomization via pairwise independence. Ultimately we end up with a hypothetical fast algorithm for evaluating a trilinear form with nonnegative coefficients and the same support as $T_{\tau,k}$, which suffices for the problem of detecting $\tau$-bounded tripartitions.

        \subsection{Earlier Work on Chromatic Number and $k$-Coloring}
	\label{sec: related work}
	Exponential time algorithms for chromatic number and $k$-coloring have been studied for more than fifty years.
	The first theoretical non-trivial algorithm for chromatic number that we know of is an $\btO(n!)$ time algorithm by Christofides~\cite{Christofides1971} from 1971.
	Using exponential space dynamic programming across the vertex subsets, Lawler~\cite{Lawler1976} designed an algorithm running in $\bO(2.4423^n)$ time based on enumerating the maximal independent sets of the graph. Further refinements of this idea by bounding the number of maximal independent sets by size led to the improvements $\bO(2.4151^n)$ by Eppstein~\cite{Eppstein2003} and $\bO(2.4023^n)$ by Byskov~\cite{Byskov2003}. Bj\"orklund and Husfeldt~\cite{BjorklundH2006} used the principle of inclusion and exclusion combined with fast matrix multiplication to compute the chromatic number in $\bO(2.3236^n)$ time. By replacing the fast matrix multiplication for the fast zeta transform, a variant of Yates's algorithm, the computation was brought down to $\btO(2^n)$ time by Bj\"orklund, Husfeldt, and Koivisto~\cite{BjorklundHK2009}.

	Faster algorithms exist for $k$-coloring for small $k$. The number of maximal independent sets in a graph is at most $3^{n/3}$, as proved by Moon and Moser~\cite{MoonM1965}, and the maximal independent sets can be listed this fast~\cite{TomitaTT2006}. Checking for each such set whether the complementary vertex set induces a bipartite graph gives a deterministic $\bO(1.4423^n)$ time algorithm for $3$-coloring. 
	Schiermeyer~\cite{Schiermeyer93} gave a $\bO(1.415^n)$ time algorithm. 
	Beigel and Eppstein~\cite{BeigelE2005} constructed a $\bO(1.3289^n)$ time algorithm for $3$-coloring based on a new algorithm for $(3,2)$-CSP. In a recent preprint, Meijer~\cite{Meijer2023} claims a $\bO(1.3217^n)$ time algorithm.
        
        For $4$-coloring, Byskov, Madsen, and Skjernaa~\cite{ByskovMS2005} presented an $O(1.8613^n)$ time algorithm that is based on listing all maximal bipartite subgraphs and, for each, checking if the vertex complement induces a bipartite graph. 
	Beigel and Eppstein~\cite{BeigelE2005} presented a general $\bO(1.8072^n)$ time algorithm for $(4,2)$-CSP, which in particular can be used to solve $4$-coloring. 
	Byskov~\cite{Byskov2004} gave a faster algorithm in $\bO(1.7504^n)$ time based on listing maximal independent sets and reducing to $3$-coloring. Fomin, Gaspers, and Saurabh~\cite{Fomin2007} gave a $\bO(1.7272^n)$ time algorithm that also lists maximal independent sets, but additionally exploits the fact that either branching is sufficiently efficient, or the remaining graph has low path width.  The running time bound of this algorithm was recently improved to $\bO(1.7215^n)$ by Clinch, Gaspers, Saffidine, and Zhang~\cite{Clinch2024}. A deterministic polynomial space algorithm for $4$-coloring running in $\bO(1.7159^n)$ time was very recently obtained by Wu, Gu, Jiang, Shao, and Xu~\cite{WuGJSX2024}.

	\subsection{Organization of the Paper}
The rest of this paper is organized as follows. We postpone our proofs of Theorem~\ref{thm:fine-3part} and Theorem~\ref{thm: setcover} and explain the connection to chromatic number first. In Section~\ref{sec: preliminaries}, we set up some notation and preliminaries including how we can use Theorem~\ref{thm:fine-3part} to solve for balanced coverings. In Section~\ref{sec: chromatic number}, we describe and analyse five cases leading up to the proof of Theorem~\ref{thm: chromaticnumber}. 
In Section~\ref{sec: set cover under arc}, we finally prove Theorem~\ref{thm:fine-3part} and Theorem~\ref{thm: setcover}.
	\section{Preliminaries}
	\label{sec: preliminaries}
	\subsection{Notation} We write $\btO(f(n))$ for a non-decreasing function $f(n)$ of the input size parameter $n$ to suppress polynomial factors in $n$. With $A\dcup B$ we denote the disjoint union of two sets $A$ and $B$. 
	For numbers $\delta >0$ and $q$, we abbreviate $(1\pm \delta)q = [(1-\delta)q,(1+\delta)q]$.
	
	\subsection{Entropy and Binomial Coefficients} 
        Let us recall the binary entropy function
	\[
	H(p)=-p\log_2p-(1-p)\log_2 p\,,\qquad 0<p<1\,,
	\]
	which is monotone increasing on $0<p<\frac{1}{2}$.
        For integers $1\leq b\leq a-1$, we recall (e.g.,~\cite{Jukna2011}) that  
	\begin{equation}\label{eq:binomentropy}
		\binom{a}{b}=\frac{a!}{b!(a-b)!}\leq 2^{H(b/a)a}\,.
	\end{equation}

	\subsection{Yates's Algorithm} 
        We recall the following standard application of Yates's~\cite{Yates1937} algorithm for evaluating a Kronecker-structured trilinear form. 
	
        \begin{lemma}
        \label{lem:trilinear-arc}
        Let $c$ be a positive integer constant and let $T=\sum_{i,j,k\in[c]}T_{i,j,k}X_iY_jZ_k$ be a tensor with rational coefficients $T_{i,j,k}$ for $i,j,k\in[c]$. Then, assuming the asymptotic rank conjecture over a field of characteristic zero, for all $\epsilon>0$ there exists an algorithm that computes the polynomial evaluation $T^{\otimes r}(\alpha,\beta,\gamma)$ in time $\bO(c^{(1+\epsilon)r})$ from three vectors $\alpha,\beta,\gamma:[c]^r\rightarrow\{0,1\}$ given as input.
        \end{lemma}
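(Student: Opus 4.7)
The plan is to exploit ARC to obtain a compact rank decomposition of $T^{\otimes s_0}$ for a large constant $s_0$, and then to use this decomposition to evaluate $(T^{\otimes s_0})^{\otimes q}$ on the given inputs efficiently via Yates's algorithm, for $q=\lfloor r/s_0\rfloor$. Concretely, fix $\epsilon'\in(0,\epsilon)$. Since ARC gives $\AR(T)\leq c$, the definition of asymptotic rank yields a positive integer $s_0$ with $R:=\Ra(T^{\otimes s_0})\leq c^{(1+\epsilon')s_0}$. We fix such an $s_0$ together with a corresponding rank-$R$ decomposition $T^{\otimes s_0}=\sum_{\ell=1}^{R}u_\ell\otimes v_\ell\otimes w_\ell$ with $u_\ell,v_\ell,w_\ell\in\Q^{c^{s_0}}$, hardcoding both into the algorithm; this is legitimate since $T$, $c$, and $\epsilon$ are fixed constants and not part of the input.

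On input $r$ and $\alpha,\beta,\gamma:[c]^r\to\{0,1\}$, write $r=qs_0+\rho$ with $0\leq\rho<s_0$, and split
\[
T^{\otimes r}(\alpha,\beta,\gamma)=\sum_{(i,j,k)\in([c]^{\rho})^{3}}T^{\otimes\rho}_{i,j,k}\cdot(T^{\otimes s_0})^{\otimes q}(\alpha_i,\beta_j,\gamma_k),
\]
where $\alpha_i\in\Q^{c^{qs_0}}$ is the restriction of $\alpha$ to the hyperplane on which the last $\rho$ coordinates equal $i$, and $\beta_j,\gamma_k$ are analogous. There are only $c^{3\rho}\leq c^{3s_0}=\bO(1)$ such slice triples, so iterating over them is free. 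For each, the Kronecker-product structure gives $(T^{\otimes s_0})^{\otimes q}(\alpha_i,\beta_j,\gamma_k)=\sum_{\vec\ell\in[R]^q}\tilde\alpha_{\vec\ell}\tilde\beta_{\vec\ell}\tilde\gamma_{\vec\ell}$, with $\tilde\alpha=(U^{\otimes q})^{T}\alpha_i$ for the $c^{s_0}\times R$ matrix $U$ whose columns are $u_1,\ldots,u_R$, and similarly for $\tilde\beta,\tilde\gamma$.

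Compute $\tilde\alpha,\tilde\beta,\tilde\gamma$ with Yates's algorithm in $q$ passes: pass $t\in\{1,\ldots,q\}$ contracts the $t$-th factor of length $c^{s_0}$ against $U^{T}$, taking the running tensor from shape $R^{t-1}\times c^{(q-t+1)s_0}$ to $R^{t}\times c^{(q-t)s_0}$ at cost $\bO(R^{t}c^{(q-t+1)s_0})$. Since $R\geq c^{s_0}$, the geometric sum is dominated by $t=q$, giving $\bO(qR^{q}c^{s_0})$ for Yates and $\bO(R^{q})$ for the final summation. Plugging in $R\leq c^{(1+\epsilon')s_0}$ and $qs_0\leq r$, and absorbing the $\bO(1)$ slice count together with the constants depending on $s_0$, the overall cost is $\bO(qc^{(1+\epsilon')r})$; choosing $\epsilon'<\epsilon$ lets the exponential slack $c^{(\epsilon-\epsilon')r}$ swallow the polynomial factor $q\leq r$, delivering the claimed $\bO(c^{(1+\epsilon)r})$. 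The one step that deserves care, though it is not truly an obstacle, is verifying the Yates cost accounting with a \emph{rectangular} factor matrix $U$: one must track the asymmetric dimensions $c^{s_0}$ and $R$ at each pass to confirm that the dominant term is $R^{q}c^{s_0}$ rather than $Rc^{qs_0}$.
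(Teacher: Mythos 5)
Your approach is essentially the paper's: invoke ARC to fix a compact rank decomposition of $T^{\otimes s_0}$ for a constant $s_0$ with $\Ra(T^{\otimes s_0}) \leq c^{(1+\epsilon')s_0}$, then evaluate $(T^{\otimes s_0})^{\otimes q}$ by Yates; the paper disposes of the remainder $\rho = r - qs_0$ by tensoring a trivial rank-$c^2$ decomposition of $T$ into the Kronecker-structured transform, whereas you iterate over $c^{3\rho}=\bO(1)$ slice triples, and these are interchangeable. You do leave implicit two technical points that the paper makes explicit. First, ARC only guarantees a low-rank decomposition of $T^{\otimes s_0}$ over \emph{some} field of characteristic zero, not a priori over $\Q$, so hardcoding $u_\ell,v_\ell,w_\ell \in \Q^{c^{s_0}}$ needs a justification; the paper cites B\"urgisser--Clausen--Shokrollahi (Prop.~15.17, Cor.~15.18) to pass to rational entries. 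Second, your cost is counted in field operations rather than bit operations; to obtain a \emph{time} bound one should observe that with $\{0,1\}$-valued inputs and constant-bit hardcoded matrices, the intermediate rationals have $\bO(r)$-bit numerators and denominators, so the $\operatorname{poly}(r)$ per-operation overhead is absorbed by the same exponential slack $c^{(\epsilon-\epsilon')r}$ that you already use to swallow the factor $q$.
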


        \begin{proof}
        By standard properties of tensor rank, there exist 
        rational matrices $A,B,C\in\Q^{[c^2]\times [c]}$ of shape $c^2\times c$ 
        with $\sum_{\ell\in[c^2]}A_{\ell,i}B_{\ell,j}C_{\ell,k}=T_{i,j,k}$ 
        for all $i,j,k\in[c]$. Let $\epsilon>0$ be fixed. By the asymptotic
        rank conjecture, there exist positive integer constants $s$ and $d$
        as well as rational\footnote{A priori, we only know that the entries of these matrices are elements of our working field $\mathbb{F}$, which may be an extension of $\Q$. However it follows from \cite[Proposition 15.17, Corollary 15.18]{BurgisserCS2013} that we may assume that they are in fact elements of the subfield $\Q$.} matrices $\tilde A,\tilde B,\tilde C\in\Q^{[d]\times [c]^s}$ of shape $d\times c^s$ with $c^s<d\leq c^{(1+\epsilon/2)s}$ and 
        $\sum_{\ell\in[d]}A_{\ell,I}B_{\ell,J}C_{\ell,K}=\prod_{u\in[s]}T_{i_u,j_u,k_u}=(T^{\otimes s})_{I,J,K}$ for all $I=(i_1,i_2,\ldots,i_s)\in[c]^s$, $J=(j_1,j_2,\ldots,j_s)\in[c]^s$, and $K=(k_1,k_2,\ldots,k_s)\in[c]^s$. 
        Given three vectors $\alpha,\beta,\gamma:[c]^r\rightarrow\{0,1\}$ as input, the 
        algorithm computes, using Yates's algorithm as a subroutine for 
        multiplying the Kronecker-structured matrix with the vector 
        in each case, the three vectors
        \begin{equation}
        \label{eq:yates-3}
        \begin{split}
        \hat\alpha&\leftarrow\bigl(\tilde A^{\otimes\lfloor r/s\rfloor}\otimes A^{\otimes(r-s\lfloor r/s\rfloor)}\bigr)\alpha\,,\\
        \hat\beta&\leftarrow\bigl(\tilde B^{\otimes\lfloor r/s\rfloor}\otimes B^{\otimes(r-s\lfloor r/s\rfloor)}\bigr)\beta\,,\\
        \hat\gamma&\leftarrow\bigl(\tilde C^{\otimes\lfloor r/s\rfloor}\otimes C^{\otimes(r-s\lfloor r/s\rfloor)}\bigr)\gamma\,.\\
        \end{split}
        \end{equation}
        Finally, the algorithm computes the return value 
        \begin{equation}
        \label{eq:trilinear-hat}
T^{\otimes r}(\alpha,\beta,\gamma)\leftarrow \sum_{\lambda\in [d]^{\lfloor r/s\rfloor}\times[c^2]^{r-s\lfloor r/s\rfloor}}\hat\alpha_\lambda\hat\beta_\lambda\hat\gamma_\lambda\,.
        \end{equation}
        Because $c$, $d$, and $s$ are constants, the steps \eqref{eq:yates-3} and \eqref{eq:trilinear-hat} can be implemented with at most $\bO(c^{(1+\epsilon/2)r})$ rational operations. We observe that the entire computation can thus be implemented in $\bO(c^{(1+\epsilon)r})$ time; indeed, since each of the input vectors $\alpha,\beta,\gamma$ is zero-one-valued, and the rational matrices $A,B,C,\tilde A,\tilde B,\tilde C$ each have constant size and constant-bit denominator complexity, we observe that the denominator-complexity in all the rational arithmetic in \eqref{eq:yates-3} and \eqref{eq:trilinear-hat} is at most $\bO(r)$ bits. 
        \end{proof}

	\subsection{Listing $t$-Covers}
	Given a set family $\mathcal{F}\subseteq2^{[n]}$ and $t \in \mathbb N$,
	a tuple $(F_{1},\ldots,F_{t})\in\mathcal{F}^{t}$
	is a \emph{$t$-cover} of a set $U$ if $U\subseteq\bigcup_{i=1}^{t}F_{i}$.
	We say that $U$ is \emph{$t$-covered} by $\mathcal{F}$ if it admits a $t$-cover.
	
	Let $\mathcal{F}\subseteq\mathcal{U}\subseteq2^{[n]}$ be set families.
	Bj\"orklund \emph{et al.}~\cite{BjorklundHKK10} showed how to efficiently list the sets in $\mathcal U$ that are $t$-covered by $\mathcal{F}$, provided that $\mathcal U$ is closed under subsets. For completeness, we include a self-contained proof of this result. 
	
	\begin{lemma}\label{lem: list-covers}
	Let $\mathcal{F}\subseteq\mathcal{U}\subseteq2^{[n]}$ with $\mathcal{U}$ closed under subsets and computable in time $O^{*}(|\mathcal{U}|)$.
	For $t\in \mathbb N$, 
	we can list all $U \in \mathcal{U}$ that are $t$-covered by $\mathcal F$
	in time $O^{*}(|\mathcal{U}|)$.
	
	\end{lemma}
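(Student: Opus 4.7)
Following the inclusion--exclusion approach of Bj\"orklund, Husfeldt, and Koivisto~\cite{BjorklundHK2009}, I would reduce the listing problem to computing, for every $U\in\mathcal{U}$, the integer $c_t(U)$ that counts the ordered tuples $(F_1,\ldots,F_t)\in\mathcal{F}^t$ with $U\subseteq F_1\cup\cdots\cup F_t$. Since $c_t(U)\geq 0$, the set $U$ is $t$-covered if and only if $c_t(U)>0$. A standard inclusion--exclusion on the elements of $U$ that fail to be covered by $\bigcup F_i$ yields
\[
c_t(U)\;=\;\sum_{X\subseteq U}(-1)^{|X|}s(X)^t\,,\qquad s(X)\;=\;|\{F\in\mathcal{F}:F\cap X=\emptyset\}|\,.
\]
A second inclusion--exclusion expresses $s$ in terms of $r(Y):=|\{F\in\mathcal{F}:Y\subseteq F\}|$ as $s(X)=\sum_{Y\subseteq X}(-1)^{|Y|}r(Y)$. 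Since every $X,Y$ that appears is a subset of some member of $\mathcal{U}$, downward closure of $\mathcal{U}$ places all these sets in $\mathcal{U}$.

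Given this, I would compute $r$, then $s$, then $c_t$ on $\mathcal{U}$ by a chain of three Yates-style zeta transforms, each confined to the sublattice $\mathcal{U}$. Starting from the $\{0,1\}$-indicator of $\mathcal{F}\subseteq\mathcal{U}$, an up-zeta transform yields $r$; a down-zeta transform applied to $(-1)^{|Y|}r(Y)$ yields $s$; a further down-zeta transform applied to $(-1)^{|X|}s(X)^t$ yields $c_t$. The algorithm then lists the sets $U\in\mathcal{U}$ with $c_t(U)>0$. All intermediate integers are bounded in absolute value by $|\mathcal{F}|^t\leq 2^{nt}$, so arithmetic on them stays polynomial in $n$ and $t$.

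The heart of the argument is that each of these three zeta transforms can be carried out in $\btO(|\mathcal{U}|)$ time over $\mathcal{U}$, rather than over all of $2^{[n]}$. For a down-zeta sweep: updating the value at $U\in\mathcal{U}$ with $i\in U$ requires the current value at $U\setminus\{i\}$, and $\mathcal{U}$ being downward closed guarantees $U\setminus\{i\}\in\mathcal{U}$, so every required lookup finds a stored entry. For an up-zeta sweep: updating the value at $Y\in\mathcal{U}$ with $i\notin Y$ would require the current value at $Y\cup\{i\}$, and if $Y\cup\{i\}\notin\mathcal{U}$, then by downward closure no superset of $Y\cup\{i\}$ lies in $\mathcal{U}$ either, hence the true value of the up-zeta at $Y\cup\{i\}$ is zero and the contribution is simply dropped. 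Storing $\mathcal{U}$ in a hash table (possible since $\mathcal{U}$ is listable in $\btO(|\mathcal{U}|)$ time), each of the $n$ passes of each transform runs in $\btO(|\mathcal{U}|)$ time, for an overall $\btO(|\mathcal{U}|)$ running time. The main obstacle is precisely this last argument for the sparse up-zeta step: one has to be careful to verify that the entries outside $\mathcal{U}$ that never get materialised would indeed have vanished, so that the restricted sweep still computes $r$ correctly on $\mathcal{U}$.
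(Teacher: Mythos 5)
Your argument is correct, and it takes a closely related but not identical route to the paper. Both approaches reduce to a pipeline of three ``sparse'' zeta-style transforms confined to the downward-closed lattice $\mathcal{U}$, and both rely on the same key observation that downward closure makes the restricted transforms correct: for a down sweep every needed predecessor $U\setminus\{i\}$ lies in $\mathcal U$, and for an up sweep any $Y\cup\{i\}\notin\mathcal U$ has all of its supersets outside $\mathcal U$ too, so its running partial sum would be zero and can be safely dropped. The difference lies in which quantities are chained together. You directly compute $c_t(U)=|\{(F_1,\ldots,F_t)\in\mathcal F^t:U\subseteq\bigcup F_i\}|$ by the classic avoidance inclusion--exclusion, going $\mathbf 1_{\mathcal F}\xrightarrow{\uparrow}r\xrightarrow{\text{sign}+\downarrow}s\xrightarrow{t\text{-th power}+\text{sign}+\downarrow}c_t$. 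The paper instead first takes the downward closure $\mathcal F_\downarrow=\bigcup_{F\in\mathcal F}2^F$ (via an up-transform followed by clipping non-zeroes to $1$), computes $g(X)=|\{I\subseteq X: I\in\mathcal F_\downarrow\}|$ by a down-transform, raises to the $t$-th power, and then does a signed down-transform (Möbius inversion) to count $t$-tuples from $\mathcal F_\downarrow^t$ whose union is \emph{exactly} $X$. The two pipelines compute numerically different quantities but are both nonzero precisely when $U$ is $t$-covered. The paper also packages the transforms differently: rather than your Yates-style passes with dictionary lookups, it proves a separate claim expressing $T^\uparrow_\alpha$ and $T^\downarrow_\alpha$ as weighted path counts in a layered DAG over $\{\bot\}\cup[n+1]\times\mathcal U$, which absorbs the lookup issue into the graph construction; your dictionary-based sweep achieves the same $\btO(|\mathcal U|)$ bound. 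One small remark: the paper's choice to work with $\mathcal F_\downarrow$ and exact covers lets it reuse a single transform primitive $T^\downarrow_\alpha$ with $\alpha\in\{1,-1\}$ uniformly, whereas your version interleaves sign flips with the transforms; both are fine, and your version avoids the clipping step entirely.
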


        \begin{proof}
        The algorithm is based on efficient algorithms for certain transforms over $\mathcal U$, which is seen as part of the subset lattice.
	The relevant transforms are essentially ``upward'' and ``downward'' sums for vectors that are indexed by $\mathcal U$.
	\begin{claim}
	\label{lem: transforms}
		Given $\alpha\in\mathbb{Q}$ and a vector $f$ with indices from $\mathcal U$,
		there is an $O^{*}(|\mathcal{U}|)$ time algorithm for computing the
		vectors $T_{\alpha}^{\downarrow}f$ and $T_{\alpha}^{\uparrow}f$
		defined by
		\[
		(T_{\alpha}^{\downarrow}f)(X)=\sum_{Y\subseteq X,\,Y\in\mathcal{U}}\alpha^{|X\setminus Y|}f(Y),
		\qquad
		(T_{\alpha}^{\uparrow}f)(X)=\sum_{Y\supseteq X,\,Y\in\mathcal{U}}\alpha^{|Y\setminus X|}f(Y).
		\]	
	\end{claim}
        \begin{claimproof}
		We express the entries of $T_{\alpha}^{\downarrow}f$ and $T_{\alpha}^{\uparrow}f$ as path counts in directed acyclic graphs $D$ and use dynamic programming to determine all relevant path counts simultaneously in the claimed running time.
		Given a directed acyclic graph $D$ with edge-weights $w:E(D)\to\mathbb{Q}$
		and given vertices $u,v\in V(D)$, define the weighted path count $\rho(u,v):=\sum_{P}\prod_{e\in P}w(e)$, where $P$
		ranges over all paths from $u$ to $v$.
		
		We first consider $T_{\alpha}^{\uparrow}$; the algorithm for $T_{\alpha}^{\downarrow}$ is similar.
		Construct the directed acyclic graph $D$ with vertex set $\{\bot\}\cup[n+1]\times\mathcal{U}$,
		where $(i,X)$ for $X\in\mathcal{U}$ is connected to $(i+1,X)$ with
		weight $1$ and to $(i+1,X\setminus\{i\})$ with weight $\alpha$; 
		this vertex exists by the assumption on $\mathcal U$.
		For each $Y\in\mathcal{U}$, assign weight $f(Y)$ to the edge from $\bot$ to $(1,Y)$.
		We observe that the vertices of $D$ can be arranged into $n+2$ layers, with $\bot$ forming the first layer.
		For all $X \in \mathcal U$, we have $\rho(\bot,(n+1,X))=(T_{\alpha}^{\uparrow}f)(X)$.
		Proceeding forward along layers, we inductively determine $\rho(\bot,v)$
		for $v\in V(D)$ from the values of the at most two predecessors of $v$. This
		takes at most two operations per $v\in V(D)$, summing to $\btO(|\mathcal{U}|)$
		operations overall. 
		
		To obtain $T_{\alpha}^{\downarrow}$, construct a ``converse'' of $D$ by adding a vertex $\top$ instead of $\bot$, with edges of weight $f(Y)$ from $(n+1,Y)$ to $\top$. Then determine $\rho((1,X),\top)$ as above.
	\end{claimproof}

	Now, let $\mathcal{F}_{\downarrow}=\bigcup_{F\in\mathcal{F}}2^{F}$ be
	the downward closure of $\mathcal{F}$. We compute its indicator vector $f$ by viewing $\mathcal{F}$
	as an indicator vector and clipping non-zeroes in $T_{1}^{\uparrow}\mathcal{F}$ to $1$. 
	
	Then compute $g=T_1^{\downarrow}f$. 
	For $X \in \mathcal U$, we have that $g(X)=\#\{I\subseteq X\mid I\in\mathcal{F}_{\downarrow}\}$
	counts the subsets of $X$ that are $1$-covered by $\mathcal F$.
	Let $g^t$ be the pointwise $t$-th power of $g$. Then $g^{t}(X)=g(X)^{t}$
	for $X\in\mathcal{U}$ counts the $t$-tuples $I=(I_{1},\ldots,I_{t})\in\mathcal{F}_{\downarrow}^{t}$
	with $I_{i}\subseteq X$. We call any such tuple $I$ a \emph{$t$-pre-cover}
	of $X$ and observe that $I$ is a $t$-cover of $X$ if $I$
	is \emph{not} a $t$-pre-cover of $X\setminus\{x\}$ for any $x\in X$.
	By the inclusion-exclusion formula, the number of $t$-covers
	of $X$ is thus $h^{t}(X):=\sum_{Y\subseteq X}(-1)^{|X\setminus Y|}g^{t}(Y)$.
	We can compute the entire vector $h^t$ as $T_{-1}^{\downarrow}g^{t}$.
	The algorithm then outputs all $X$ for which the number of $t$-covers satisfies $h^{t}(X)>0$.
	Using the above proposition to compute the transforms efficiently, the running time follows.
\end{proof}

	\subsection{Detecting $\nu$-Balanced $k$-Covers under the Asymptotic Rank Conjecture}
	We proceed to show that the listing algorithm from the previous section, 
	together with the conjectured tripartition algorithm from Theorem~\ref{thm:fine-3part}, 
	yields a nontrivial $k$-covering algorithm that works for general $k$, provided that the cover satisfies a specific balancing condition.

	Let $\frac13\leq\nu<\frac12$.
	A $k$-cover $(F_1,\ldots,F_k)$  of $[n]$ is \emph{$\nu$-balanced} if it can be partitioned into three roughly equal-sized parts in the following sense:
	There are positive integers $k_1,k_2,k_3$ with $k_1+ k_2+k_3= k$ such that 
	\[ 
	|F_1\cup\ldots\cup F_{k_1}|\leq 
	|F_{k_1+1}\cup\ldots\cup F_{k_1+k_2}|\leq
	|F_{k_1+k_2+1}\cup\ldots\cup F_k|
	\leq \nu n\,.\]

	We now have the following algorithm for $\nu$-balanced $k$-covers.

	\begin{lemma}
	  \label{lem:balancedkcover}
	  Let $\frac13 \leq\nu < \frac12$ and let $\epsilon > 0$.
	  For positive integer $k\in\mathbb N$ and set family $\mathcal F \subseteq 2^{[n]}$,
	  if the asymptotic rank conjecture is true
	  then there is a deterministic algorithm to decide if $[n]$ admits a $\nu$-balanced $k$-cover from $\mathcal F$ with running time $\btO\bigl(\binom{n}{\lfloor \nu n\rfloor}^{1+\epsilon}\bigr)$.
	
	\end{lemma}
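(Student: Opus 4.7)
The plan is, for each decomposition $k=k_1+k_2+k_3$ into positive integers (of which there are $O(k^2)$ many), to set up a single three-way partitioning instance that Theorem~\ref{thm:fine-3part} solves in time $\btO\bigl(\binom{n}{\lfloor\nu n\rfloor}^{1+\epsilon}\bigr)$ and that captures the existence of a $\nu$-balanced $k$-cover with that decomposition. For each $i\in\{1,2,3\}$ I would construct the family
\[
\mathcal{G}_i=\bigl\{U\subseteq[n]\;:\;|U|\leq\lfloor\nu n\rfloor\text{ and }U=F_1\cup\cdots\cup F_{k_i}\text{ for some }F_1,\ldots,F_{k_i}\in\mathcal{F}\bigr\},
\]
noting that every such $F_j$ is automatically contained in $U$ and hence has size at most $\lfloor\nu n\rfloor$. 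By definition of $\nu$-balancedness, a $\nu$-balanced $k$-cover of $[n]$ with decomposition $(k_1,k_2,k_3)$ corresponds exactly to $U_i\in\mathcal{G}_i$ with $U_1\cup U_2\cup U_3=[n]$, i.e.~a three-way \emph{covering}. Since Theorem~\ref{thm:fine-3part} is about \emph{partitioning}, I would instead apply it to the downward closures $\mathcal{G}_i^{\downarrow}$ inside $\binom{[n]}{\leq\lfloor\nu n\rfloor}$; three-way covering by $(\mathcal{G}_1,\mathcal{G}_2,\mathcal{G}_3)$ and three-way partitioning by $(\mathcal{G}_1^{\downarrow},\mathcal{G}_2^{\downarrow},\mathcal{G}_3^{\downarrow})$ are equivalent, since any cover of $[n]$ refines to disjoint traces $V_i\subseteq U_i$, and conversely any partition with $V_i\in\mathcal{G}_i^{\downarrow}$ extends to $U_i\in\mathcal{G}_i$ whose union still contains $[n]$.

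To compute $\mathcal{G}_i$ within budget, I would adapt the M\"obius-inversion idea in the proof of Lemma~\ref{lem: list-covers}, but applied directly to $\mathcal{F}$ rather than to its downward closure. Letting $\mathcal{U}=\binom{[n]}{\leq\lfloor\nu n\rfloor}$ and $c(W)=|\{F\in\mathcal{F}:F\subseteq W\}|$, obtainable as $T_1^{\downarrow}\mathbf{1}_{\mathcal{F}\cap\mathcal{U}}$, inclusion-exclusion on the subset lattice yields
\[
N_i(V)=\sum_{W\subseteq V}(-1)^{|V\setminus W|}c(W)^{k_i}=\bigl|\bigl\{(F_1,\ldots,F_{k_i})\in\mathcal{F}^{k_i}:F_1\cup\cdots\cup F_{k_i}=V\bigr\}\bigr|,
\]
computed as $T_{-1}^{\downarrow}$ applied to the pointwise $k_i$-th power of $c$. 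Then $V\in\mathcal{G}_i$ iff $N_i(V)>0$, and the indicator of $\mathcal{G}_i^{\downarrow}$ is the Boolean clipping of $T_1^{\uparrow}\mathbf{1}_{\mathcal{G}_i}$. Each of these transforms runs in $\btO(|\mathcal{U}|)=\btO\bigl(\binom{n}{\lfloor\nu n\rfloor}\bigr)$ time by Claim~\ref{lem: transforms}, and integer bit sizes stay polynomial in $n$ throughout. Feeding the three families $\mathcal{G}_i^{\downarrow}$ into Theorem~\ref{thm:fine-3part} then dominates the per-decomposition cost at $\btO\bigl(\binom{n}{\lfloor\nu n\rfloor}^{1+\epsilon}\bigr)$, and the $O(k^2)$ decompositions are absorbed by $\btO$.

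The main pitfall to avoid is invoking Lemma~\ref{lem: list-covers} off-the-shelf, which only enumerates subsets $V$ that are $k_i$-\emph{coverable} by $\mathcal{F}$ (i.e.~contained in some $k_i$-union)---a strictly larger family than $\mathcal{G}_i$. For instance, with $\mathcal{F}=\{[n]\}$ and $\nu<1$ every small $V$ is vacuously $1$-coverable, so the naive reduction would answer ``yes'' even though only the full-universe set is available and no $\nu$-balanced cover exists. Insisting via the inclusion-exclusion above that the union equals $V$ \emph{exactly} is precisely what aligns three-way covering with the existence of a $\nu$-balanced cover.
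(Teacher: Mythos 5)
Your proof is correct but takes a genuinely different, and in fact sharper, route than the paper's. The paper restricts to the $\nu$-bounded subfamily $\mathcal F' = \{F \in \mathcal F : |F| \leq \lfloor\nu n\rfloor\}$, invokes Lemma~\ref{lem: list-covers} off the shelf to produce the downward-closed families $\mathcal U_{k_i}$ of all $k_i$-covered sets in $\binom{[n]}{\leq \lfloor\nu n\rfloor}$, and feeds $([n],\mathcal U_{k_1},\mathcal U_{k_2},\mathcal U_{k_3})$ directly to Theorem~\ref{thm:fine-3part}. You instead isolate the exact $k_i$-fold unions $\mathcal G_i$ by an extra M\"obius inversion and only then close downward. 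The families are related by $\mathcal G_i^{\downarrow} \subseteq \mathcal U_{k_i}$, and the inclusion can be strict: a set $V$ with $|V|\leq\lfloor\nu n\rfloor$ can be $k_i$-covered by members of $\mathcal F'$ whose full union spills past $\lfloor\nu n\rfloor$, placing $V$ in $\mathcal U_{k_i}$ but outside $\mathcal G_i^{\downarrow}$. Consequently your algorithm decides exactly the stated predicate, whereas the paper's may answer ``yes'' on inputs that admit a $k$-cover but no $\nu$-balanced one. That weaker guarantee is all that is used downstream: for Lemma~\ref{lem: 3waykcol}, Theorem~\ref{thm: setcover}, and Theorem~\ref{thm: chromaticnumber}, completeness (a $\nu$-balanced cover implies ``yes'') together with the fact that any ``yes'' still certifies \emph{some} $k$-cover is what matters, so both proofs serve the paper equally well. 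One small correction to your closing remark: $\mathcal F = \{[n]\}$ with $\nu < 1$ does not actually trip up the paper's version, since the restriction to $\mathcal F'$ makes that family empty and Lemma~\ref{lem: list-covers} then yields nothing nontrivial; the over-report you are worried about does occur, but only on subtler instances where each covering set is individually $\nu$-bounded yet their union within a block is not.
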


        \begin{proof}
	 
	  Set $s=\lfloor \nu n \rfloor$.
	  Begin by constructing the $\nu$-bounded family $\mathcal U = \{\,U\subseteq [n]\colon |U|\leq s\,\}$
	  of size $\binom{n}{0}+\cdots+\binom{n}{s} \leq n\binom{n}{s}$.
	  Similarly, let $\mathcal F'$ be the  $\nu$-bounded subfamily of $\mathcal F$,
	  formally $\mathcal F' = \{\, F\in \mathcal F\colon |F| \leq s\,\}$.
	  Clearly, any $k$-cover by $\mathcal F'$ is also a $k$-cover by $\mathcal F$.
	  For the other direction, assume that $(F_1,\ldots, F_k)$ is a $\nu$-balanced $k$-cover.
	  Then in particular, every set size $|F_i|$ is at most $s$, so that each $F_i$ belongs to $\mathcal F$.
	  Thus, it suffices to investigate the covers from $\mathcal F'$.

	  For each $t\in\{1,\ldots, k\}$,
	  let $\mathcal U_t$ denote the subsets $U\in\mathcal U$ that are $t$-covered by $\mathcal F'$.
	  By Lemma~\ref{lem: list-covers} these can be constructed in time $\btO(\binom{n}{s})$.
	  Now iterate over all positive integers $k_1,k_2,k_3$ with $k_1+k_2+k_3=k$; there are $\binom{k+2}{2}=O(n^2)$ such triples.
	  For each such $k_1,k_2,k_3$ 	we can invoke Theorem~\ref{thm:fine-3part} on the three-way partitioning instance $([n], \mathcal U_{k_1}, \mathcal U_{k_2}, \mathcal U_{k_3})$ to find a tripartition $f_1\in U_{k_1},f_2\in U_{k_2},f_3\in U_{k_3}$ of $[n]$.
	  By construction, each part $f_i$ admits a $k_i$-cover, which together cover $[n]$.
	  The running time is $\btO\bigl( \binom{n}{\lfloor \nu n\rfloor}^{1+\epsilon}\bigr)$.
	\end{proof}

	\section{Chromatic Number under the Asymptotic Rank Conjecture}
	\label{sec: chromatic number}
	A (proper) $k$-\emph{coloring} of an undirected $n$-vertex graph $G$ is a mapping $f\colon V(G)\rightarrow \{1,\ldots,k\}$
	such that $f(u)\neq f(v)$ for $uv\in E(G)$.
	The vertex subsets $S_i$ for which $f(S_i) = i$ are called \emph{color classes};
	each color class forms an independent set in $G$.

	\subsection{Balanced Colorings and Size Profiles}

	We begin by pointing out two immediate consequences of the results in the previous section.
	
	\begin{lemma}
		\label{lem: listinginduced}
		Given a graph $G$ on $n$ vertices, a positive integer $k$, and $0<\nu<\tfrac12$, we can list every vertex subset $X\subset V(G)$ with $|X|\leq \nu n$ such that the induced graph $G[X]$ is $k$-colorable, in time $\btO \left(\binom{n}{\lfloor \nu n \rfloor}\right)$.
        \end{lemma}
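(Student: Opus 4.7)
The plan is to reduce the problem to the cover-listing machinery of Lemma~\ref{lem: list-covers}, with no appeal to Theorem~\ref{thm:fine-3part}. First, I would set $s=\lfloor \nu n\rfloor$ and take the universe of candidate subsets to be $\mathcal U=\{\,U\subseteq V(G)\colon |U|\leq s\,\}$, which is closed under subsets and has $|\mathcal U|\leq (n+1)\binom{n}{s}$ since $\nu<\tfrac12$ implies $s<n/2$ and so the binomial coefficients $\binom{n}{i}$ are monotone for $0\leq i\leq s$. The family $\mathcal U$ can be enumerated directly in time $\btO(\binom{n}{s})$. Next, I would let $\mathcal F\subseteq\mathcal U$ be the subfamily of \emph{independent} sets of $G$ of size at most $s$, obtained by scanning through $\mathcal U$ and testing each candidate for independence in polynomial time, well within the overall budget.

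The key observation is the equivalence: for $X\in\mathcal U$, the induced subgraph $G[X]$ admits a proper $k$-coloring if and only if $X$ is $k$-covered by $\mathcal F$ in the sense of Lemma~\ref{lem: list-covers}. Indeed, given a proper $k$-coloring of $G[X]$, each color class is an independent set of $G$ of size at most $|X|\leq s$, so it lies in $\mathcal F$, and the $k$ color classes jointly cover $X$. Conversely, if $X\subseteq F_1\cup\cdots\cup F_k$ with $F_i\in\mathcal F$, then assigning each $v\in X$ to any index $i$ with $v\in F_i$ yields a valid $k$-coloring of $G[X]$, because each $F_i\cap X$ inherits independence from $F_i$.

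With this equivalence in hand, invoking Lemma~\ref{lem: list-covers} with parameter $t=k$ on the pair $(\mathcal F,\mathcal U)$ then lists exactly the desired subsets in the claimed running time $\btO(|\mathcal U|)=\btO\bigl(\binom{n}{s}\bigr)$. I do not anticipate a genuine obstacle: the reduction is direct, the hypotheses of Lemma~\ref{lem: list-covers} are immediate, and the assumption $\nu<\tfrac12$ enters only to absorb the partial sum $\sum_{i=0}^{s}\binom{n}{i}$ into $\btO\bigl(\binom{n}{s}\bigr)$ via the monotonicity of the binomial coefficients below $n/2$.
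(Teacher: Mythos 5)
Your proof is correct and follows essentially the same route as the paper's: set $\mathcal U$ to be the family of subsets of size at most $\lfloor\nu n\rfloor$, take $\mathcal F$ to be the independent sets, and apply Lemma~\ref{lem: list-covers} with $t=k$. The paper's proof is a one-liner and you simply fill in the (correct) details, including the observation that one should restrict $\mathcal F$ to independent sets of size at most $\lfloor\nu n\rfloor$ so that $\mathcal F\subseteq\mathcal U$ as Lemma~\ref{lem: list-covers} requires, a point the paper's phrasing glosses over.
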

        
	\begin{proof}
	  Identify $V(G)$ with $[n]$, set $\mathcal U=\binom{[n]}{\leq \nu n}$, set $t=k$, and apply Lemma~\ref{lem: list-covers} with $\mathcal F$ chosen as the family of independent sets in $G$.
	\end{proof}

	\begin{lemma}
		\label{lem: 3waykcol}
		Let $\frac13\leq \nu<\frac12$ and let  $\epsilon>0$.
		If the asymptotic rank conjecture is true, we can given any large enough graph $G$ on $n$ vertices detect whether it admits a $k$-coloring such that the color classes can be partitioned in three parts, none of which is larger than $\nu n$, in $\btO(\binom{n}{\lfloor \nu n\rfloor}^{1+\epsilon})$ time. 
	\end{lemma}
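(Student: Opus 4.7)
The plan is to reduce directly to Lemma~\ref{lem:balancedkcover} by choosing the set family $\mathcal{F}$ to be the family of all independent sets of $G$. Although $\mathcal{F}$ may be exponentially large, the algorithm of Lemma~\ref{lem:balancedkcover} only needs access to $\mathcal{F} \cap \binom{[n]}{\leq \lfloor\nu n\rfloor}$ through its indicator vector on subsets of size at most $\lfloor \nu n\rfloor$, and testing whether a given vertex subset is independent in $G$ is a polynomial-time check on the edges of $G$, so this poses no issue.

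The main content is an equivalence: $V(G)$ admits a $\nu$-balanced $k$-cover by independent sets if and only if $G$ admits a $k$-coloring whose color classes can be partitioned into three groups each of total size at most $\nu n$. The forward direction is immediate, since the color classes of such a coloring are $k$ pairwise disjoint independent sets whose union is $V(G)$, and the prescribed grouping of classes is already a $\nu$-balanced partition of the corresponding cover.

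For the reverse direction I will use the standard disjointification trick: given a $\nu$-balanced $k$-cover $(F_1,\ldots,F_k)$ witnessed by positive integers $k_1,k_2,k_3$, define $S_i := F_i \setminus \bigcup_{j<i} F_j$. Each $S_i$ is independent as a subset of $F_i$, and the $S_i$ partition $V(G)$, so they constitute the color classes of a proper $k$-coloring of $G$. Because $S_i \subseteq F_i$, every grouped union $\bigcup_{i \in I} S_i$ is contained in $\bigcup_{i \in I} F_i$ and thus still has size at most $\nu n$, so the balance is preserved group by group. (Since $\nu<\tfrac12$, all three groups must in fact be nonempty, so padding with empty color classes if needed keeps $k_1,k_2,k_3\geq 1$ consistent with the definition of $\nu$-balancedness.)

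The running time then follows immediately from Lemma~\ref{lem:balancedkcover}. I do not anticipate any real obstacle; the only mild subtlety is turning a cover into a proper partition while preserving the three-way balance, and the disjointification above handles this cleanly.
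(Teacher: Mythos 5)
Your proposal is correct and follows the paper's approach exactly: the paper's proof is the one-liner ``apply Lemma~\ref{lem:balancedkcover} with $\mathcal F$ taken as the family of independent sets of $G$,'' and your write-up simply spells out the straightforward but implicit details (access to $\mathcal F$ via independence tests, and the disjointification argument that a $\nu$-balanced $k$-cover by independent sets yields a $k$-coloring with the required three-way balance). No gap.
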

	
	\begin{proof}
	  Identify $V(G)$ with $[n]$ and apply Lemma~\ref{lem:balancedkcover} with $\mathcal F$ as the family of independent sets in $G$.
	\end{proof}

	There are many colorings that this algorithm alone will fail to detect.
	One example is the complete $4$-partite graph $K_{n/4,n/4,n/4, n/4}$, which admits only $\frac14$-balanced colorings.
	Other examples are very `lopsided' colorings, where one of the color classes is huge.

	To classify these cases, we need some terminology.
	Our convention will be to list the independent sets in nonascending order by size, breaking ties arbitrarily, so that $|S_1|\geq \cdots \geq |S_k|$.
	The \emph{size profile} of a coloring $f$ is the sequence $s_1,\ldots, s_k$ of cardinalities $s_i=|S_i|$.
	
	An independent set is \emph{maximal under set inclusion} if it is not a proper subset of another independent set.
	Such a set is called a \emph{maximal independent set}.
	We recall the following:
	\begin{lemma}
		\label{lem:S1mis}
		If $G$ is $k$-colorable then it admits a $k$-coloring with color classes $S_1,\ldots, S_k$, where $S_1$ is a maximal independent set.
	\end{lemma}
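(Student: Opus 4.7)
The plan is to take any proper $k$-coloring of $G$ and then greedily enlarge one designated color class until it becomes inclusion-maximal, verifying that each augmentation preserves the property of being a proper $k$-coloring.

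First, pick any proper $k$-coloring $(T_1,\ldots,T_k)$ of $G$ and focus on $T_1$. If $T_1$ is already a maximal independent set, set $S_i := T_i$ and we are done. Otherwise, there exists a vertex $v\notin T_1$ with no neighbor in $T_1$; equivalently, $T_1\cup\{v\}$ is independent. This vertex lies in some other class $T_j$ with $j\geq 2$. Transfer $v$ from $T_j$ to $T_1$, producing new classes $T_1':=T_1\cup\{v\}$, $T_j':=T_j\setminus\{v\}$, and $T_i':=T_i$ for $i\notin\{1,j\}$.

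The resulting tuple is still a proper $k$-coloring: $T_1'$ is independent by the choice of $v$, $T_j'$ is independent as a subset of $T_j$, and all other classes are unchanged. Crucially, $|T_1'|=|T_1|+1$, so this strict size increase serves as a termination measure. Iterating the transfer step, we reach a proper $k$-coloring $(T_1^\star,\ldots,T_k^\star)$ in at most $n-|T_1|$ rounds in which $T_1^\star$ admits no extension by any outside vertex, i.e.\ $T_1^\star$ is a maximal independent set. Relabel $S_1:=T_1^\star$ and $S_i:=T_i^\star$ for $i\geq 2$.

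There is no real obstacle here; the only minor subtlety is that some class $T_j$ with $j\geq 2$ may become empty during the process, but this is harmless because a $k$-coloring is permitted to use fewer than $k$ colors (equivalently, to have empty color classes). If one wishes the ordering $|S_1|\geq\cdots\geq|S_k|$ to hold simultaneously with $S_1$ being maximal, it is enough to observe that the greedy process can equally well start from the largest class of the initial coloring; since that class only grows, after the process $S_1$ remains at least as large as every other class, so the nonascending convention can be maintained by reindexing $S_2,\ldots,S_k$.
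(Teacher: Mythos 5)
Your proposal is correct and takes essentially the same approach as the paper: both enlarge a designated color class to a maximal independent set and recolor accordingly. The paper does this in a single step (fix a maximal independent set $X\supseteq S_1$ and set $f'(x)=1$ for all $x\in X$), whereas you carry out the same enlargement vertex-by-vertex; this is only a presentational difference, and your closing remark about starting from the largest class matches the paper's implicit choice of $S_1$.
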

	\begin{proof}
		Consider a coloring $f$ whose largest color class~$S_1$ is a subset of the maximal independent set $X$. 
		Then $f'$ defined by $f'(x)= 1$ for $x\in X$ and $f'(v) =f(v)$ for $v\in V\setminus X$ is a $k$-coloring with largest color class $X$.
	\end{proof}

	Therefore, to find a $k$-coloring it suffices to investigate all possible size profiles under the assumption that $S_1$ is a maximal independent set.

	Depending on the shape of a coloring's size profile, we will apply different algorithms to detect it.
	To classify the shapes into useful categories we observe the following:
	
	\begin{lemma}
		\label{lem:cases}
		Consider $k$ nonnegative numbers $s_1,\ldots, s_k$ with
		\( s_1+\cdots +s_k = n\).
		Then for any number $d$, at least one of the following is true:
		\begin{enumerate}
			\renewcommand{\theenumi}{\alph{enumi}}
			\item $k\leq 3$.
			\item $s_1+s_2+s_3+s_4 >  n- 6d$.
			\item $s_1+s_2 >  \frac12 n + d$.
			\item $\frac12n - d <  s_1 \leq  \frac12 n+ d$ and $s_2 < 2d$.
			\item $s_1 \leq \frac12n-d$ and $s_1+s_2+s_3+s_4 \leq n-6d$.
		\end{enumerate}
	\end{lemma}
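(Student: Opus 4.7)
The statement is a case analysis, and the natural plan is to prove it by showing that if (a), (b), (c), and (d) all fail, then (e) must hold. The structure is essentially forced by the statement itself: the second inequality of (e) is literally the negation of (b), so the entire content lies in deriving the bound $s_1 \leq \tfrac12 n - d$ from the negations of (a), (b), (c), and (d). I would first record that we may assume $s_1 \geq s_2 \geq \cdots \geq s_k \geq 0$, since this is the intended convention from the preceding discussion of size profiles (and needed to make sense of (c) and (d)).

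The main calculation proceeds as follows. Suppose for contradiction that $s_1 > \tfrac12 n - d$. From the failure of (c) we have $s_1 + s_2 \leq \tfrac12 n + d$, and since $s_2 \geq 0$ this also gives $s_1 \leq \tfrac12 n + d$. Hence $s_1$ lies in the interval from the first conjunct of (d), namely $\tfrac12 n - d < s_1 \leq \tfrac12 n + d$. Because (d) fails as a whole, the second conjunct must be false, i.e.\ $s_2 \geq 2d$. Combining these two strict/weak bounds gives
\[
s_1 + s_2 > \bigl(\tfrac12 n - d\bigr) + 2d = \tfrac12 n + d,
\]
which contradicts the failure of (c). Therefore $s_1 \leq \tfrac12 n - d$, and together with the failure of (b) this yields (e).

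I do not expect any real obstacle here; the lemma is essentially a book-keeping statement whose role is to partition all size profiles into the five cases handled separately in the proof of Theorem~\ref{thm: chromaticnumber}. The only mild subtlety is being careful about strict versus weak inequalities, in particular that the interval in case (d) is half-open on the left but closed on the right, so that negating its first conjunct produces either $s_1 \leq \tfrac12 n - d$ (which is what we want) or $s_1 > \tfrac12 n + d$ (which is already excluded by the failure of (c)). Note also that $k \leq 3$ plays no role in the arithmetic at all; case (a) is listed only because the remaining cases implicitly refer to $s_1,s_2,s_3,s_4$ and so presume $k \geq 4$.
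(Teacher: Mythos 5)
Your proof is correct and uses the same underlying argument as the paper: derive $s_1 \le \tfrac12 n + d$ from the failure of (c), observe that failure of (d) then forces $s_1 \le \tfrac12 n - d$ or $s_2 \ge 2d$, and show the second alternative also yields $s_1 \le \tfrac12 n - d$. The paper phrases this last step directly ($s_1 = (s_1+s_2)-s_2 \le \tfrac12 n + d - 2d$) while you phrase it as a contradiction, but the arithmetic is identical; also, the sorting assumption you flag is not actually needed anywhere — nonnegativity of $s_2$ suffices.
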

	
	\begin{proof}
		Assume (a)--(d) fail.
		Note that when (c) fails we have $s_1\leq s_1+s_2\leq \frac12n +d$, so for (d) to fail we must have $s_1\leq \frac12 n -d$
		or $s_2\geq 2d$.
		In the latter case, we have
		\[ s_1 = s_1 + s_2 - s_2 \leq \tfrac12n + d -2d = \tfrac12n -d,\]
		as well.
		In either case (e) applies.
	\end{proof}
	
	Each of the five cases will be handled separately by an algorithm in the next five subsections. According to the above lemma, these algorithms together exhaust all size profiles and therefore all possibilities for a $k$-coloring.
	
	To gain some intuition, set $d = \frac1{145} n$. We use the $4$-coloring algorithm in Fomin~\emph{et al.}~\cite{Fomin2007} and refer to it as \textbf{$\mathbf 4$-Colorability}. 
	
	\subsection{Case A: At Most Three Color Classes}
	\label{sec:chi3}
	For a given graph and $k\leq 3$, there are algorithms that work for every size profile and every $d$.
	For $k=1$, we inspect the graph to see whether it has no edges, in which case $\chi(G)=1$.
	For $k=2$ we check bipartiteness by depth-first search to determine if $\chi(G)=2$.
	For $k=3$, we can run any $3$-coloring algorithm.
	Thus if we let $t_3 (n)$ denote the running time of the best $3$-coloring algorithm, we can bound the running time of this case to within a polynomial factor of 
	\begin{equation}\label{eq:ta}
		t_{\mathrm a}(n) = t_3(n)\,.
	\end{equation}
	
	\subsection{Case B: Four Very Large Color Classes}
	\label{sec: unbalanced}
	We detect $k$-colorings in which the four largest color classes exhaust almost all vertices.
	To be precise, we want to detect $k$-colorings whose size profile $(s_1,\ldots, s_k)$ satisfies
	\begin{equation}
		\label{eq: 4sum}
		s_1+s_2+s_3+s_4 > n-6d\,.
	\end{equation}

	\subsubsection*{Algorithm}
	Given $G$, $k$, and $d$, we invoke Lemma~\ref{lem: listinginduced} to list every vertex subset $X$ of size at most $6d$ such that $G[X]$ is  $(k-4)$-colorable.
	For each such $X$ we investigate if the graph $G[V\setminus X]$ induced by the complementary vertex set is $4$-colorable using the algorithm \textbf{$\mathbf 4$-Colorability}.

	\subsubsection*{Analysis}
	Writing $t_4(r)$ for the running time of the best algorithm for $4$-colorability on $r$-vertex graphs, we get
	\begin{equation}\label{eq:rt4}
		t_{\mathrm b}(n, d) = \btO\left(\binom{n}{\lfloor 6d\rfloor}\right)+
		\sum_{i=1}^{\lfloor 6d\rfloor} \binom{n}{i} t_4(n-i) .
	\end{equation}
	
	\subsection{Case C: Two Large Color Classes}
	
	We turn to the case where the two largest color classes exhaust at least a little more than half of the vertices.
	To be precise, we want to detect $k$-colorings whose size profile $(s_1,\ldots, s_k)$ satisfies
	\begin{equation}
		\label{eq: 2sum}
		s_1+s_2 > \tfrac12n+ d.
	\end{equation}
	
	\subsubsection*{Algorithm}
	Given $G$, $k$, and $d$,
	we invoke Lemma~\ref{lem: listinginduced} for all vertex subsets of size at most $\frac{n}{2}-d$ to see which subsets $X$ induce a graph $G[X]$ of chromatic number at most $k-2$, and then check if $G[V\setminus X]$ is bipartite using depth-first search.
	
	\subsubsection*{Analysis}
	The running time is within a polynomial factor of
	\begin{equation}\label{eq:rt2}
		t_{\mathrm c}(n, d) = \sum_{i=1}^{\lfloor \frac12n-d\rfloor}\binom{n}{i} \leq n\binom{n}{\lfloor \frac12n - d\rfloor} \,.
	\end{equation}
	
	\subsection{Case D: Lopsided Colorings}
	\label{sec: lopsided}
	
	We turn to the `lopsided' case where the largest color class exhausts around half of the vertices, but the remaining classes are quite small.
	To be precise, we want to detect $k$-colorings whose size profile $(s_1,\ldots, s_k)$ satisfies
	\begin{equation}
		\label{eq: lopsided}
		\tfrac12n - d < s_1 \leq \tfrac12n + d \text{ and } s_2 < 2d,
	\end{equation}
	and where $S_1$ is inclusion-maximal.
	(This is the only one of our algorithms that uses the assumption on $S_1$.)
	
	\medskip
	
	We will also require $d=\delta n$ for some constant $\delta > 0$.
	
	\subsubsection*{Algorithm}
	Given $G$, $k$, and $d$, the algorithms proceeds as follows.
	Use the algorithm of Bron and Kerbosch~\cite{BronK1973} to list every maximal independent set $X$ of size $i$ in the range $\frac12n-d\leq i\leq \frac12n + d$.
	For each such $X$ construct the graph $G'=G[V\setminus X]$ with $n'=n-i$ many vertices.
	Use Lemma~\ref{lem: 3waykcol} on $G'$ with parameter
	\begin{equation}
	\label{eq: nud}
	\nu =\frac13 + \frac{2d}{n-i}=\frac13 +\frac{2\delta n}{n-i}\,.
	\end{equation}
	to determine if it admits a $(k-1)$-coloring.
	
	\subsubsection*{Analysis}
	
	To see correctness, it is clear that if $G$ is $k$-colorable with color profile $(s_1,\ldots, s_k)$ then $G'$ is indeed $(k-1)$-colorable with the corresponding color classes $S_2,\dots,S_k$.
	But we need to argue that the balancing condition in Lemma~\ref{lem: 3waykcol}  applies.
	To this end we need to show that the color classes $S_2,\ldots, S_k$ admit a partition into three roughly equal-sized parts.
	An easy greedy procedure establishes that such a partition exists:
	Add $S_2,\ldots,S_r$ to the first part $P_1$ until $s_2+\cdots+s_r +s_{r+1}> \frac13n'+2d$.
	Similarly, add $S_{r+1},\ldots , S_t$ to the second part $P_2$ until $s_{r+1}+\cdots +s_t + s_{t+1}> \frac13n'+2d$. 
	The remaining sets go into $P_3$.
	Since $s_{r+1},s_{t+1}< 2d$ the two first parts add up to at least $\frac23n'$.  
	Thus, $P_3$ adds up to at most $\frac13n'$.
	In particular, none of the parts is larger than 
	\( \nu n'=\frac13 n' + 2d\).
	Hence we can use Lemma~\ref{lem: 3waykcol} with parameter $\nu$ to see if  $G'$ admits a $(k-1)$-coloring.
	
	\medskip
	
	To analyze the running time, it is known that the Bron--Kerbosch algorithm lists all 
	maximal independent sets within time proportional to the Moon-Moser bound of their maximum number, $3^{n/3}$~\cite{MoonM1965}, see~\cite{TomitaTT2006}.
	We discard those $X$ with $|X|< n/2-d$ or $|X|> n/2 + d$.
	To see how many sets are left, we need a bound on the number of maximal independent sets $\operatorname{\#mis}_i(G)$ of size $i$ in a graph $G$.
	\begin{lemma}
	\label{lem: mis}
		The number of maximal independent sets in an $n$-vertex graph $G$ is at most $2^{n-\alpha(G)}$, where $\alpha(G)$ is the size of the largest independent set in the graph.
	
	\end{lemma}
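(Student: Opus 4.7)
The plan is to fix an arbitrary maximum independent set $I \subseteq V(G)$ with $|I| = \alpha(G)$ and exhibit an injection from the set of maximal independent sets of $G$ into the power set of $V(G)\setminus I$. Since $|V(G)\setminus I| = n-\alpha(G)$, this yields the claimed bound $2^{n-\alpha(G)}$.

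First I would define the map $M \mapsto A_M := M \setminus I$ on maximal independent sets $M$ of $G$, landing in $2^{V(G)\setminus I}$. The core step is to argue that $M$ can be recovered from $A_M$, so that this map is injective. Concretely, I claim
\[
M \cap I \;=\; \{\,v \in I : v \text{ has no neighbor in } A_M\,\}\,.
\]
The inclusion ``$\subseteq$'' follows from independence of $M$: any $v \in M \cap I$ is nonadjacent to every vertex of $M$, and in particular to every vertex of $A_M \subseteq M$. For ``$\supseteq$'', let $v \in I$ with no neighbor in $A_M$. Since $I$ itself is independent, $v$ also has no neighbor inside $I$, and thus no neighbor inside $M \cap I \subseteq I$. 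Hence $v$ has no neighbor in $M = A_M \cup (M \cap I)$, so by maximality of $M$ we must already have $v \in M$, and therefore $v \in M \cap I$.

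This identity shows that $M$ is completely determined by $A_M = M \setminus I$: from $A_M$ we reconstruct $M \cap I$ via the displayed formula, and then $M = A_M \cup (M \cap I)$. Hence the assignment $M \mapsto A_M$ is injective, so the number of maximal independent sets of $G$ is at most $|2^{V(G)\setminus I}| = 2^{n-\alpha(G)}$. I do not anticipate a genuine obstacle here; the only subtlety is invoking maximality in the right direction, using the fact that $I$ being independent guarantees $v\in I$ has no neighbor in $M \cap I$ for free.
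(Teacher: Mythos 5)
Your proof is correct and takes essentially the same approach as the paper: both fix a maximum independent set $I$ (the paper calls it $X$) and show that the map $M \mapsto M\setminus I$ from maximal independent sets into $2^{V(G)\setminus I}$ is injective. The only cosmetic difference is that you exhibit an explicit inverse (recovering $M\cap I$ from $M\setminus I$), whereas the paper argues injectivity by contradiction from two maximal independent sets with the same projection.
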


        \begin{proof}	
		Let $X$ be a maximum independent set in $G$, and let $Y=V(G)\setminus X$.
		Now observe that there cannot be two different maximal independent sets $A$ and $B$, possibly of different sizes, with the same projection on $Y$, i.e., with $Z=A\cap Y=B\cap Y$. Suppose there was, and let $a\in A\setminus B$ be any element and note that $a\in X$. Clearly, $\{a\}\cup Z$ is an independent set since $a\in A$. We also have that $\{a\}\cup  (B\cap X)$ is an independent set since $X$ is an independent set. This implies that $\{a\}\cup B$ is an independent set which contradicts the maximality of $B$.
		
		Hence, for every $Z\subseteq Y$, there is at most one maximal independent set $W$ with $Z=W\cap Y$. Since $Y$ has $2^{n-\alpha(G)}$ subsets, the bound follows.
	\end{proof}
	
	In particular, we have
	\begin{equation}
		\label{eq: misboundapx}
		\operatorname{\#mis}_i(G)\leq 2^{n-i}.
	\end{equation}
	Sharper bounds for maximal independent sets of size less than $n/2$ are given by Byskov~\cite{Byskov2004}, but we do not need them here.
	
	For each $X$ with $\frac12n-d\leq i\leq  \frac12n + d$, we apply the procedure behind~Lemma~\ref{lem: 3waykcol} on the $(n-i)$-vertex graph $G[V\setminus X]$, with $\nu$ given as in \eqref{eq: nud}.
	This takes time $\binom{n-i}{\lfloor \nu (n-i)\rfloor}^{1+\epsilon}$.
	In total over all salient $X$, we have
	\begin{equation}
		\sum_{i=\lceil n/2-d\rceil}^{\lfloor n/2+d\rfloor } 
		\operatorname{\#mis}_{i}(G)\binom{n-i}{\lfloor (n-i)/3+2d\rfloor }^{1+\epsilon} \leq
		2^{n/2+d}\binom{\lfloor \frac12n+d\rfloor}{\lfloor n/6+7d/3\rfloor}^{1+\epsilon}\,.
	\end{equation}
	A quick estimate shows that our initial execution of the Bron--Kerbosch algorithm will be dwarfed by these running times, so we can disregard it and estimate the total time for the algorithm in this section to within a polynomial factor or
	\[ t_{\mathrm d}(n,\delta n) = 2^{n/2 +\delta n}\binom{\lfloor \frac12n + \delta n\rfloor}{\lfloor \frac16n +\frac73 \delta n\rfloor}^{1+\epsilon}
	\,.\]
	
	\subsection{Case E: Semi-Balanced Colorings}
	\label{sec:balanced}
	
	We are left with the task of detecting $k$-colorings whose size profile $(s_1,\ldots, s_k)$ satisfies
	\begin{equation}\label{eq:balanceconstraint}
		s_1 \leq \tfrac12n-d \text{ and } s_1+s_2+s_3+s_4 \leq n-6d\,.
	\end{equation}
	We also assume $d=\delta n$ for constant $\delta > 0$.
	
	These inequalities will be enough to guarantee that the colors can be partitioned in three roughly balanced parts.
	Thus, the algorithm for this case is very simple:

	\subsubsection*{Algorithm}
	Given $G$, $k$, and $d=\delta n$, apply Lemma~\ref{lem: 3waykcol} with $\nu=\tfrac12-\delta$.

	\subsubsection*{Analysis}
	To show that the conditions for Lemma~\ref{lem: 3waykcol} apply,  we need to show that the color classes can be partitioned in three parts, none of which is greater than $\frac{n}{2}-\delta n$ in size.
	
	\begin{lemma}
		\label{lem: semi-balanced}
		Let $s_1,\ldots, s_k$ be numbers in nonascending order with
		\(
		s_1+\cdots+s_k = n
		\).
		If the numbers satisfy \eqref{eq:balanceconstraint} then 
		they can be partitioned into three sets, each of which sums to at most $\tfrac12 n - d\,.$
		
	\end{lemma}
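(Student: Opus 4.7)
I plan to prove the lemma by constructing the required partition explicitly via the greedy longest-processing-time-first (LPT) rule: iterate through $s_1, s_2, \ldots, s_k$ in order and assign each item to whichever of the three bins currently has the smallest total. Let $M$ denote the resulting maximum bin sum, $s_{j^*}$ the last item assigned to the bin attaining $M$, and write $\Sigma_t := s_1 + \cdots + s_t$. I will bound $M \leq \tfrac{1}{2}n - d$ by case analysis on $j^*$.

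The easy cases are $j^* \leq 4$. For $j^* \leq 3$ the max bin is a singleton, so $M \leq s_1 \leq \tfrac12 n - d$ by the first part of \eqref{eq:balanceconstraint}. For $j^* = 4$, LPT will have placed $s_4$ into $\{s_3\}$, the smallest of the singletons $\{s_1\},\{s_2\},\{s_3\}$, so $M = s_3 + s_4 \leq (s_1 + s_2 + s_3 + s_4)/2 \leq (n-6d)/2 \leq \tfrac12 n - d$ by the second part of \eqref{eq:balanceconstraint}. For $j^* \geq 5$, the standard LPT observation gives $M \leq \Sigma_{j^*-1}/3 + s_{j^*}$, because $s_{j^*}$ was placed into the smallest of three bins summing to $\Sigma_{j^*-1}$. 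The case $j^* \geq 6$ then follows by combining $s_{j^*} \leq n/j^* \leq n/6$ (from $j^* s_{j^*} \leq \Sigma_{j^*} \leq n$) with $s_{j^*} \leq s_4 \leq (n-6d)/4$; plugging whichever of the two is tighter into $M \leq n/3 + 2s_{j^*}/3$ yields $M \leq \tfrac12 n - d$.

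The main obstacle is the sub-case $j^* = 5$: the naive combination $s_5 \leq s_4 \leq \Sigma_4/4$ together with $\Sigma_4 \leq n - 6d$ produces only $M \leq 7(n-6d)/12$, which exceeds $\tfrac12 n - d$ for $d < n/30$. To close this gap I will also invoke the complementary bound $s_5 \leq n - \Sigma_4$ coming from $s_5 + s_6 + \cdots + s_k = n - \Sigma_4$, and split on whether $\Sigma_4 \leq 4n/5$. In the first regime $s_5 \leq \Sigma_4/4$ dominates and gives $M \leq 7\Sigma_4/12$; in the second regime (only possible when $d < n/30$) $s_5 \leq n - \Sigma_4$ dominates and gives $M \leq n - 2\Sigma_4/3$. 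A short check shows that both are at most $7n/15 \leq \tfrac12 n - d$ for $d \leq n/30$, while the range $d \geq n/30$ is already handled by the naive bound, which suffices for the paper's choice of $d = \delta n$ with small $\delta$.
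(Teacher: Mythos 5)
Your proof is correct, but it takes a genuinely different route from the paper's. The paper places $s_1,s_2,s_3$ into three bins, then fills $P_1$ and $P_2$ greedily with the \emph{smallest} remaining items until none fits, and finally bounds the leftover mass in $P_3$ via an averaging argument; you instead run longest-processing-time-first (assign each item, in decreasing order, to the currently lightest bin) and bound the final maximum load $M$ through the index $j^*$ of the last item placed in the heaviest bin. Both are elementary and of comparable length. However, your treatment of $j^*=5$ (the threshold at $\Sigma_4=4n/5$, the constant $7n/15$, and the secondary split on whether $d$ is above or below $n/30$) is an unnecessary detour, as is the auxiliary bound $s_{j^*}\leq n/6$ for $j^*\geq 6$. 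The one inequality you already invoke for $j^*\geq 6$, namely $M\leq n/3+\tfrac{2}{3}s_{j^*}$, which follows from $M\leq\Sigma_{j^*-1}/3+s_{j^*}$ together with $\Sigma_{j^*-1}\leq n-s_{j^*}$, combined with $s_{j^*}\leq s_4\leq\Sigma_4/4\leq(n-6d)/4$, yields
\[
M \;\leq\; \frac{n}{3} + \frac{2}{3}\cdot\frac{n-6d}{4} \;=\; \frac{n}{3}+\frac{n-6d}{6} \;=\; \frac{n}{2}-d
\]
for \emph{every} $j^*\geq 5$, with no further case distinction. The difficulty you perceived at $j^*=5$ arose only because there you substituted the looser hypothesis $\Sigma_4\leq n-6d$ for the sharper structural fact $\Sigma_4\leq n-s_5$, which is exactly what your $j^*\geq 6$ bound was already implicitly using.
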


        \begin{proof}
		Consider the following procedure.
		\begin{enumerate}
			\item Arbitrarily place $s_1$, $s_2$, and $s_3$ into $P_1$, $P_2$, $P_3$.
			Let $p_i$ denote the sum of the elements in part $P_i$.
			Invariantly, we will maintain $p_i\leq \tfrac12 n - d$.
			
			\item In ascending order, place $s_k, s_{k-1}, \ldots$ arbitrarily into the first two parts $P_1$ and $P_2$ as long as the invariant is maintained.
			
			\item Place the remaining numbers into the third part $P_3$.
		\end{enumerate}
		
		We need to argue that in step 3, the remaining numbers, say
		$s_4,\ldots, s_j$, fit into $P_3$ without violating the invariant.
		Since after step 2, the value $s_j$ does not fit in the first two parts, we have
		\[ 
		\min \{p_1,p_2\} + s_j > \tfrac12 n-d\,.
		\]
		Using an averaging argument and $p_1+p_2+p_3+r = n$, we can write
		\[
		\tfrac 12 n- d < \frac{p_1 + p_2}{2} + s_j = \frac{n - p_3 - r}{2}+ s_j\,,
		\]
		which simplifies to
		\[
		p_3 + r < 2 d+ 2s_j\,.
		\]
		From the sum constraint, we have
		\[
		s_j \leq s_4 \leq \tfrac14 n-6d\,.
		\]
		Combining these two expressions we arrive at
		\[
		p_3 + r < 2d+\tfrac12 n- 3d = \tfrac 12 n -d\,,
		\]
		so the remaining elements do indeed fit into $P_3$.
	\end{proof}

	The running time, for any $\epsilon>0$, is within a polynomial factor of
	\[
	t_{\mathrm e}(n,\delta n) = 
	\binom{n}{\lfloor\frac12n-\delta n\rfloor}^{1+\epsilon}.
	\]
	
	\subsection{Algorithm for Chromatic Number}
	\label{sec:summary}
	In this section we prove Theorem~\ref{thm: chromaticnumber}. 
	
	Given a graph $G$, we iterate over $k\in\{1,\ldots,n\}$ in ascending order.
	For each $k$, we run the applicable subset of the five algorithms from Sections~\ref{sec:chi3}--\ref{sec:balanced} with $d=\frac1{145} n$.
	If any detects a $k$-coloring, we return $k$ as the chromatic number of the graph.
	
	To establish correctness, assume there exists a $k$-coloring of $G$.
	By Lemma~\ref{lem:S1mis}, there also exists a $k$-coloring where the largest color class $S_1$ is a maximal independent set.
	The size profile of this coloring satisfies at least one of the five cases in Lemma~\ref{lem:cases}.
	Thus, one of the five algorithms in Sections~\ref{sec:chi3}--\ref{sec:balanced} detects it.
	
	\subsection{Running Times}
	\label{sec:chromatictimes}
	We set $d=\delta n$  and assume that $n/2\delta$ is an integer to avoid tedious computations.
	Using the entropy bound~\eqref{eq:binomentropy},
	we estimate the running times for each case.
	First, \[
	t_{\mathrm b}(n,\delta n)  = 
	\sum_{i=1}^{6\delta n}\binom{n}{i} t_4(n-i)
	\leq
	\sum_{i=1}^{6\delta n} 2^{H(i/n)n} t_4(n-i)
	\,.\]
	In the range $0\leq i\leq\frac14n$ the largest term occurs for the largest $i$.
	Thus we have
	\begin{equation} \label{eq:tb} 
	  t_{\mathrm b}(n, \delta n) \leq	
	n\binom{n}{6 \delta n} t_4 ((1-6\delta)n)
	\leq n2^{H(6\delta)n} t_4((1-6\delta)n)\,.
	\end{equation}
	We also have 
	\begin{equation}\label{eq:td}
	t_{\mathrm d}(n,\delta n) =  
	2^{(\frac12 + \delta)n}
	\binom{(\frac12 + \delta)n}{(\frac16+\frac{7}{3}\delta) n}^{1+\epsilon}
	\leq 
	\bigl(2\cdot 2^{H[
		(\frac16 + \frac73\delta)/(\frac12+\delta)]}\bigr)^{(\frac12 + \delta)\cdot(1+\epsilon)\cdot n},
	\end{equation}
	and 
	\begin{equation}\label{eq:te}
	  t_{\mathrm e}(n,\delta n)  = \binom{n}{(\frac12 -\delta)n}^{1+\epsilon}\leq 2^{H(\frac12 -\delta)\cdot (1+\epsilon)\cdot  n}\,.
	\end{equation}
	
	For deterministic $3$-coloring and $4$-coloring we use the bounds $t_3(r)<1.3289^r$~\cite{BeigelE2005} and $t_4(r)<1.7215^r$~\cite{Clinch2024}.
	With $\delta= \frac1{145}$ we arrive at
	\begin{gather*}
		t_{\mathrm a}(n) \leq 1.3289^n,\quad
		t_{\mathrm b}(n,\tfrac1{145}n) \leq 1.9998^n,\quad
		t_{\mathrm d}(n,\tfrac1{145}n) \leq 1.98^{(1+\epsilon)\cdot n},\\
		t_{\mathrm c}(n,\tfrac1{145}n) < t_{\mathrm e}(n,\tfrac1{145}n) \leq 1.99981^{(1+\epsilon)\cdot n}\,.
	\end{gather*}
	In particular, with $\epsilon$ sufficiently small, the total running time is asymptotically
	dominated by $1.99982^n$.

\section{Deterministic Set Cover under the Asymptotic Rank Conjecture}
	\label{sec: set cover under arc}
	In this section we first present the proof of Theorem~\ref{thm:fine-3part}. We next use it via  Lemma~\ref{lem:balancedkcover} to prove Theorem~\ref{thm: setcover}.
		
	\subsection{Three-Way Partitioning}
	
	We first consider a ``block version'' of the partitioning problem and then reduce the general problem to it. In the ideal block version, the input universe $[n]$ would be partitioned into blocks $B_1 ,\ldots, B_r$ of equal size $b=n/r$, and all sets $A$ in the input set families would contain exactly $|A| / r$ elements per block.
	To enable our derandomization approach, small deviations from this ideal need to be allowed; see Figure~\ref{fig: block-balanced} for an illustration:
	A set $A\subseteq[n]$ is called $(\delta,r)$-balanced onto another set $B$ if $|A\cap B| \in (1\pm \delta)\cdot|A|/r$. In other words, intersecting $A$ with $B$ roughly reduces its size by a factor $1/r$.
	
	Formally, the \emph{$(\delta,r)$-block-balanced partitioning problem} obtains the following input:
	\begin{itemize}
		\item parameters $\delta > 0$ and $r \in \mathbb N$,
		\item a \emph{block partition} of $[n]$ with $B_{1} \dcup\ldots\dcup B_{s} = [n]$ for $s \in \mathbb N$ such that $[n]$ is $(\delta,r)$-balanced onto each $B_j$,
		\item three $\nu$-bounded set families $\mathcal{F}_1,\mathcal{F}_2,\mathcal{F}_3$ over a universe $[n]$, such that each set $A \in \mathcal{F}_1 \cup \mathcal{F}_2 \cup \mathcal{F}_3$ is $(\delta,r)$-balanced onto each $B_j$.
		 
	\end{itemize}
	The computational task is to decide the existence of $A_1 \dcup A_2 \dcup A_3 = [n]$ with $A_i \in \mathcal F_i$ for $i\in[3]$.
	We show near-optimal algorithms for this problem under the asymptotic rank conjecture.
	\begin{figure}
                \begin{center}
		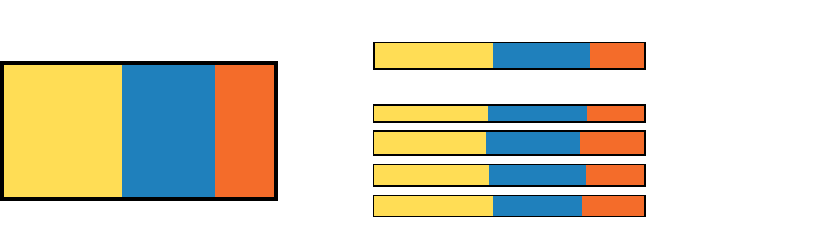
                \end{center}
		\caption{On the left, a three-way partitioning $A_1 \dcup A_2 \dcup A_3 = [n]$ is shown horizontally. On the right, the universe $[n]$ is vertically partitioned into blocks $B_1, \ldots , B_s$. Each set $A_i$ for $i\in[3]$ is $(\delta,r)$-balanced onto each block.}
		\label{fig: block-balanced}
		\centering	
	\end{figure}
	\begin{lemma}[Solving block-balanced partitioning]
	\label{lem: block-balanced partitioning}
		Let $1/3 \leq \nu < 1/2$ and let $0< \delta < \frac{1}{2\nu}-1$. 
		Assuming the asymptotic rank conjecture,
		the $(\delta,r)$-block-balanced partitioning problem for $\nu$-bounded set families and $n/r \geq 3 / \delta$ can be solved in time $O({n \choose \lfloor\nu n\rfloor}^\kappa)$ with $\kappa \to 1$ as $\delta \to 0$.
	\end{lemma}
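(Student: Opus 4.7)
The plan is to reduce the block-balanced partitioning problem to evaluating a Kronecker product of partitioning tensors, one per block. Set $\nu' := (1+\delta)\nu/(1-\delta)$; for $\delta$ sufficiently small, $\nu' < 1/2$. The key observation is that every input set $A \in \mathcal F_1 \cup \mathcal F_2 \cup \mathcal F_3$, being $\nu$-bounded and $(\delta, r)$-balanced, satisfies $|A \cap B_j| \leq (1+\delta)|A|/r \leq (1+\delta)\nu n/r \leq \nu' b_j$ for every block $B_j$, where $b_j := |B_j| \geq (1-\delta)n/r$. Thus the restriction of each input set to each block lies in the $\nu'$-bounded sublattice of $\mathcal P(B_j)$.

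Consider the Kronecker product
\[
T \;:=\; \bigotimes_{j=1}^{s} T_{\nu',\,b_j}\,,
\]
where $T_{\nu', b}$ is the partitioning tensor from Section~\ref{sec: intro-ARC}. Its variables are indexed by subsets of $[n]$ via the bijection $I \leftrightarrow (I \cap B_1, \ldots, I \cap B_s)$. Evaluating $T$ at the indicator vectors of $\mathcal F_1, \mathcal F_2, \mathcal F_3$ yields the count of triples $(A_1, A_2, A_3) \in \mathcal F_1 \times \mathcal F_2 \times \mathcal F_3$ that partition $[n]$ and further satisfy $|A_i \cap B_j| \leq \lceil \nu' b_j \rceil$ for all $i,j$. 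By the key observation, these per-block constraints are vacuous on input sets, so $T(\mathbf 1_{\mathcal F_1}, \mathbf 1_{\mathcal F_2}, \mathbf 1_{\mathcal F_3})$ equals exactly the number of three-way partitions, and testing positivity decides the problem.

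To evaluate $T$ efficiently, we invoke the asymptotic rank conjecture. Treating $n/r$ as a constant $b^*$ (the relevant regime, since the outer application is free to select $r$ accordingly), each block size $b_j$ lies in the constant-sized integer range $\{\lceil (1-\delta) b^* \rceil, \ldots, \lfloor (1+\delta) b^* \rfloor\}$. Grouping blocks by size, $T$ factors as a Kronecker product of a constant number of Kronecker powers $T_{\nu', b}^{\otimes r_b}$ of constant-size tensors. A straightforward extension of Lemma~\ref{lem:trilinear-arc} — obtained by taking the Kronecker product of the rank decompositions furnished by the asymptotic rank conjecture for each $T_{\nu', b}$ — yields a deterministic evaluation algorithm running in time $\bO(M^{1+\epsilon'})$, where $M := \prod_j m_j$ with $m_j := \sum_{i \leq \lceil \nu' b_j \rceil}\binom{b_j}{i}$ the dimension of $T_{\nu', b_j}$.

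It remains to bound $M$. Since $\nu' < 1/2$, the entropy inequality~\eqref{eq:binomentropy} (summed over $i \leq \lceil \nu' b_j \rceil$) gives $m_j \leq \operatorname{poly}(b_j) \cdot 2^{H(\nu') b_j}$, hence $M \leq \operatorname{poly}(n) \cdot 2^{H(\nu') n}$ via $\sum_j b_j = n$. Combining with $\binom{n}{\lfloor \nu n\rfloor} \geq 2^{H(\nu) n}/\operatorname{poly}(n)$ from Stirling's formula yields $M \leq \binom{n}{\lfloor \nu n\rfloor}^{H(\nu')/H(\nu) + o(1)}$. Since $\nu' \to \nu$ as $\delta \to 0$ and $H$ is continuous at $\nu \in (0, 1/2)$, the exponent $\kappa := H(\nu')/H(\nu) + o(1)$ tends to $1$, as claimed. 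The main obstacle is handling the variation in block sizes while preserving a Kronecker structure amenable to Lemma~\ref{lem:trilinear-arc} — which is stated only for powers of a single fixed tensor — and this is resolved by the grouping observation exploiting that only constantly many distinct block sizes can arise.
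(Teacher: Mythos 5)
Your proof is correct, and it takes a genuinely different route from the paper's. The paper's construction defines a \emph{single} constant-size tensor $T$ that is the \emph{direct sum} of partitioning tensors for all universe sizes $k \in \{1, \ldots, B\}$, with variables $X_{k,I}$ indexed by both the block size $k$ and the subset $I \subseteq [k]$; the evaluation is then over $T^{\otimes s}$, which fits Lemma~\ref{lem:trilinear-arc} verbatim, and the encoding of a set $A$ records the pair $(|B_j|, A \cap B_j)$ for each block. Your construction instead forms the \emph{mixed} Kronecker product $\bigotimes_j T_{\nu', b_j}$ of distinct partitioning tensors of block-dependent dimension, which does not literally fit Lemma~\ref{lem:trilinear-arc}. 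You repair this by observing that there are only constantly many distinct block sizes (a consequence of $n/r$ being a constant in the intended application), so the tensor factors as $\bigotimes_b T_{\nu', b}^{\otimes r_b}$; the extension of Lemma~\ref{lem:trilinear-arc} you invoke — take the ARC-supplied rank decomposition for each $T_{\nu', b}$ separately and tensor them, then run Yates on the mixed Kronecker structure — is indeed straightforward. Your slightly larger $\nu' := (1+\delta)\nu/(1-\delta)$ (versus the paper's $(1+\delta)\nu$) arises because you normalize by the actual block size $b_j \geq (1-\delta)b$ rather than the ideal size $b$; this is also fine, and both tend to $\nu$ as $\delta \to 0$. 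The paper's direct-sum trick is a small packaging gain (no new lemma needed); your approach is arguably more transparent about what is being evaluated, at the cost of the extension.

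One minor point worth flagging: your $\nu' < 1/2$ holds only for $\delta$ smaller than the threshold $\delta < \frac{1}{2\nu}-1$ stated in the lemma (namely $\delta < \frac{1-2\nu}{1+2\nu}$), whereas the paper's $\nu' = (1+\delta)\nu < 1/2$ holds for the full stated range. This is harmless — the lemma's conclusion concerns the limit $\delta \to 0$ — but if you want the statement to hold verbatim for all $\delta < \frac{1}{2\nu}-1$, you should either adopt the paper's normalization or explicitly restrict $\delta$.
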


	\begin{proof}
	Let $\mathcal F_1, \mathcal F_2 , \mathcal F_3$ be set families over $[n]$, let $B_{1} \dcup\ldots\dcup B_{s} = [n]$ be a block partition, and let $r \in \mathbb N$. Write $b := n/r$ for the ``ideal'' block size.
	With $\nu' := (1+\delta)\nu < 1/2$, 
	our input assumptions imply that $|A \cap B_j| \leq \nu' b$ for all $j\in [s]$ and sets $A$ in the input families.
	Let $B := \max_j |B_j| \leq (1+\delta)b$ 
	and consider the tensor given by 
	\[
	T=\sum_{k=1}^{B}\sum_{\substack{I\dcup J\dcup K=[k]\\
	|I|,|J|,|K|\leq\lceil\nu' b\rceil
	}
	}X_{k,I}Y_{k,J}Z_{k,K},
	\]
	which is the direct sum of partitioning tensors up to universe size $B$. 
	Using $\nu' < 1/2$ in the first inequality, handling rounding in the second inequality, and using $B \geq b \geq 3/\delta$, the number of dimensions per leg is
	\[
	d := \sum_{k=0}^{B}\sum_{\ell=0}^{\lceil\nu' b\rceil}{k \choose \ell}
	\ \leq\ B^{2}{B \choose \lceil\nu' b\rceil}
	\ \leq\ B^{3}{B \choose \lfloor\nu' b\rfloor}
	\ \leq\ B^{3}2^{H(\nu')B}
	\ \leq\ 2^{(H(\nu') + \delta)B}
	\ \leq\ 2^{\beta' H(\nu) B},
	\]
	with $\beta' \to 1$ as $\delta \to 0$.
	It follows that $d^s \leq 2^{\beta H(\nu) n}$ with $\beta \to 1$ as $\delta \to 0$.
	To detect a partitioning, evaluate $T^{\otimes s}$
	on binary encodings $\xi_i$ of the set families $\mathcal F_i$:
	For each $A\in\mathcal F_1$, set 
	\[
		X_{\kappa}\gets 1 \mbox{ for } \kappa=(|B_{1}|,A\cap B_{1},\,\ldots,\,|B_{r}|,A\cap B_{s}),
	\] 
	and set all other variables $X_{\kappa'} \gets 0$. 
	This yields $\xi_1$ and similarly $\xi_2$ and $\xi_3$. 
	We have $T^{\otimes s}(\xi_1,\xi_2,\xi_3) \neq 0$
	iff there exist sets $A_1 \dcup A_2 \dcup A_3 = [n]$ with $A_i \in \mathcal F_i$ for $i\in[3]$ and 
	$|A_i \cap B_j| \leq \lceil \nu' b \rceil$ for $j\in [s]$, where the last condition guaranteed by the input.
	By Lemma~\ref{lem:trilinear-arc} with $\epsilon = \beta-1$, we can evaluate $T^{\otimes s}$ in time $O(d^{(1+\epsilon)s}) = 2^{\beta^2 H(\nu) n }$, with $\beta^2 \to 1$ as $\delta \to 0$.
	\end{proof}
	Then we reduce the general partitioning problem to its block-balanced version.
	Our reduction is based on \emph{balancing families}, which we define to be moderately small families $\mathcal B$ of partitions $B_0 \cup \ldots \cup B_s = [n]$ with a particular balancing property: 
	For every possible $\nu$-bounded input to the three-partitioning problem, there exists at least one block partition in $\mathcal B$ such that the sets in the input are $(\delta,r)$-balanced onto every block, with the exception of one small ``fault block'' $B_0$.
	\begin{Definition}\label{def: block-balancing}
		For $\delta, r, \nu > 0$, a \emph{$(\delta,r,\nu)$-balancing family} is a set $\mathcal B$ of block partitions of $[n]$ such that each $(B_0,B_1,\ldots,B_s) \in \mathcal B$ satisfies 
		\begin{itemize}
			\item $|B_0| \leq \delta n$ (i.e., the fault block $B_0$ is small) and
			\item $|B_i|\in (1\pm \delta)n/r$ for $i=1,\ldots , s$ (i.e., the universe $[n]$ is $(\delta,r)$-balanced onto each $B_i$), and
		\end{itemize}
		for every partition $A_1 \dcup A_2 \dcup A_3 = [n]$ with $\nu$-bounded sets $A_i$, there exists at least one block partition $(B_0,B_1,\ldots,B_s) \in \mathcal B$ such that each $A_i$ is $(\delta,r)$-balanced onto $B_1, \ldots , B_s$.
	\end{Definition}
	We prove that sufficiently small balancing families exist and can be constructed efficiently. To this end, 
	we use pairwise independent hash functions.
	For $r\in \mathbb N$, a family $\mathcal H$ of functions $h:[n] \to [r]$ is \emph{pairwise independent} if, for any fixed $i,i'\in[n]$ and $j,j'\in[r]$ with $i \neq i'$, a randomly drawn $h \in\mathcal{H}$ satisfies $h(i)=j$ and $h(i')=j'$ with probability $1/r^{2}$.
	For $r \leq n$ with $r$ prime, there is a pairwise independent family $\mathcal H = \{h_1, \ldots, h_{|\mathcal H|} \}$ of hash functions $h:[n] \to [r]$ with $|\mathcal H| \leq O(n^2r^2)$
	such that $h_i(x)$ can be computed in polynomial time from $i$ and $x \in [n]$, see standard textbooks~\cite{MitzenmacherUpfal05}.

	\begin{lemma}[Constructing balancing families]
		\label{lem: construct-balancer}
			For every $\delta > 0$ and $b' \geq 3 \delta^{-3} (1 - 2\nu)^{-2}$, 
			we can compute $b' \leq b \leq 2b'$ and
			a $(\delta,n/b,\nu)$-balancing family $\mathcal B$ in time $\bO(2^{\delta' n})$ with $\delta' \to 0$ as $\delta \to 0$.
		\end{lemma}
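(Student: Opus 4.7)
The plan is to build $\mathcal B$ from a pairwise independent hash family together with an enumeration of small ``fault sets''.

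First, I would fix a prime $r$ in the range $[n/(2b'), n/b']$ by Bertrand's postulate and set $b := n/r$, so $b' \leq b \leq 2b'$. Let $\mathcal H$ be the pairwise independent family of hash functions $h : [n] \to [r]$ of size $\operatorname{poly}(n)$ referenced just before the lemma. For each $h \in \mathcal H$ and each subset $S \subseteq [n]$ with $|S| \leq \delta n$, form the candidate block partition $(S, h^{-1}(1) \setminus S, \ldots, h^{-1}(r) \setminus S)$ and include it in $\mathcal B$ whenever each ``true'' block $h^{-1}(j) \setminus S$ has size in $(1 \pm \delta) n/r$. The number of candidates is at most $|\mathcal H| \cdot \sum_{i \leq \delta n} \binom{n}{i} \leq 2^{H(\delta) n + O(\log n)}$, so both the construction time and $|\mathcal B|$ respect the required $\bO(2^{\delta' n})$ bound with $\delta' = H(\delta) + o(1) \to 0$ as $\delta \to 0$.

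The heart of the argument is showing $\mathcal B$ is $(\delta, r, \nu)$-balancing. Fix any $\nu$-bounded three-partition $A_1 \dcup A_2 \dcup A_3 = [n]$ and draw $h \in \mathcal H$ uniformly at random. For each $i \in \{1,2,3\}$ and each $j \in [r]$, the count $X_{i,j} := |A_i \cap h^{-1}(j)|$ has mean $\mu_{i,j} = |A_i|/r$ and variance at most $\mu_{i,j}$ by pairwise independence. Call $(i,j)$ \emph{set-bad} if $X_{i,j} \notin (1\pm\delta)\mu_{i,j}$, and $j$ \emph{block-bad} if $|h^{-1}(j)| \notin (1\pm \delta)n/r$. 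A standard Chebyshev-type computation bounds the expected number of elements of $A_i$ in set-bad blocks, and of $[n]$ in block-bad blocks, each by a quantity of order $r/\delta^2$. The hypothesis $b' \geq 3 \delta^{-3}(1 - 2\nu)^{-2}$ makes $r/n$ small enough that the total expectation across $i$ and the block-size errors is at most $\delta n / 2$. By Markov, some $h^* \in \mathcal H$ admits an ``offending'' set $S$ of size at most $\delta n$ containing every element lying in a set-bad or block-bad block, and the candidate partition for $(h^*, S)$ then lies in $\mathcal B$ and witnesses the balancing property for $(A_1, A_2, A_3)$.

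The main obstacle is tuning the Chebyshev constants so that a single fault set $S$ of size $\delta n$ absorbs both block-size deviations for $[n]$ and intersection-size deviations for all three $A_i$; as $\nu$ approaches $1/2$, the three sets $A_i$ fill $[n]$ more tightly and leave less slack, which I expect accounts for the $(1 - 2\nu)^{-2}$ factor in the lower bound on $b'$. Once these constants are calibrated, the ingredients --- Bertrand's postulate, textbook pairwise independence, enumeration of size-$\delta n$ subsets, and one second-moment argument --- are all routine.
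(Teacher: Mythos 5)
Your high-level plan --- a pairwise independent hash family, a prime modulus via Bertrand, a Chebyshev second-moment bound, and an enumeration of ``fault sets'' --- is exactly the paper's approach. The two points where the paper diverges are worth noting because they silently sidestep wrinkles in your version. First, the paper enumerates fault sets as index sets $Q\subseteq[r]$ of size at most $\delta r$ (so at most $2^r = 2^{n/b}$ candidates), rather than as arbitrary subsets $S\subseteq[n]$ of size at most $\delta n$; both exponents vanish as $\delta\to 0$, but the index-set view lets the paper deduce $|B_0|\le O(\delta n)$ automatically, because once the remaining blocks all have size in $(1\pm\delta)n/r$ and there are at least $(1-\delta)r$ of them, the leftover mass is at most $\approx 2\delta n$. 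Second, the paper bounds the expected \emph{number of bad blocks}, not the expected \emph{number of elements lying in bad blocks}. Your element-count version is not quite as routine as Chebyshev alone: a set-bad block can be bad because $|A_i\cap h^{-1}(j)|$ is far \emph{above} its mean, in which case it contributes many elements, and controlling $\E\bigl[X\cdot\mathbf{1}[X>(1+\delta)\mu]\bigr]$ needs an extra Cauchy--Schwarz step on top of the variance bound. Finally, a small bug in your construction: when $S$ is exactly the union of the bad blocks, the candidate $(S, h^{-1}(1)\setminus S, \ldots, h^{-1}(r)\setminus S)$ has several \emph{empty} blocks, which fail your own inclusion test ``each true block has size in $(1\pm\delta)n/r$''; you should instead drop empty blocks from the partition (Definition~\ref{def: block-balancing} permits any number of blocks $s$), which is what the paper's $Q$-indexed construction does implicitly. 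None of these are fatal --- the argument goes through once you switch to counting bad blocks and form the fault block from them --- but as stated the proposal would not certify the right-sized fault block or pass its own membership check.
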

		
	\begin{proof}
		Let $\kappa = 1 - 2\nu$.
		By Bertrand's postulate, there is some $b'\leq b \leq 2b'$
		such that $r := n/b$ is prime.
		Let $\mathcal H$ be an explicit pairwise independent family of hash functions from $[n]$ to $[r]$.
	The proof hinges upon the following central claim:
	Given any sets $A_1,A_2,A_3 \subseteq[n]$ of size $\geq \kappa n$ each, there is some $h \in \mathcal H$ and set $Q\subseteq [r]$ of size $\leq \delta r$ such that each $A_i$ is $(\delta,r)$-balanced onto $h^{-1}(j)$ for all $j\in [r]\setminus Q$.
	
	To prove this claim, fix a set $A\subseteq [n]$ with $|A| \geq \kappa n$ and fix $j\in[r]$. The random variable $X=|h^{-1}(j)\cap A|$ has mean $\mu = |A| / r$ and variance $\sigma^2 = |A| \frac{r-1}{r^2} \leq \frac{b(r-1)}{r} \leq b$.
		The set $A$ is $(\delta,r)$-balanced onto $h^{-1}(j)$ iff $|X-\mu|\leq t$ with $t = \delta \mu \geq \delta \kappa n / r = \delta \kappa b$.
		By Chebyshev's inequality, the event $|X-\mu| > t$ occurs with probability at most
		\[
			p \leq \frac{\sigma^2}{t^2} 
			\leq \frac{b}{t^{2}}
			\leq \frac{1}{\delta^2 \kappa^2 b}
			\leq \delta / 3,
		\]
		where we used the choice of $b \geq b'$ from the beginning.
		By a union bound, the probability of any of $A_1,A_2,A_3$ not being $\delta$-balanced onto $h^{-1}(j)$ is at most $\delta$.
		In expectation, there are thus at most $\delta r$ indices $j$ such that any of $A_1,A_2,A_3$ is not $\delta$-balanced onto $h^{-1}(j)$.
		By averaging, this bound is attained by some $h \in \mathcal H$.
		This concludes the proof of the central claim.

	To construct $\mathcal B$, 
	for each $h\in \mathcal H$ and $Q\subseteq [r]$ of size $\leq \delta r$, add a partition $B_{(h,Q)}$ into $\mathcal B$ with zero-th block $\bigcup_{j\in Q} h^{-1}(j)$, followed by blocks $h^{-1}(j_i)$ for $j \in [r]\setminus Q$.
	This can be completed in $\btO(2^{r}) = \btO(2^{cn})$ time with $c = 1/b \leq 1/b' \leq \delta^3 \kappa^2 / 3$, so $c \to 0$ as $\delta \to 0$ and $\kappa$ is fixed.

	We check that $\mathcal B$ is indeed balancing:
	Let $A_1\dcup A_2 \dcup A_3 = [n]$ be a partitioning with $|A_i| \leq \nu n$ for each $i$. This implies $|A_i|\geq \kappa n$ for all $i$.
	By the above claim, there is some $h\in \mathcal H$ and $Q \subseteq [r]$ such that each $A_i$ is $(\delta,r)$-balanced onto $h^{-1}(j)$ for $j\in [r]\setminus Q$, so $A_i$ is $(\delta,r)$-balanced onto all blocks of $B_{(h,Q)} \in \mathcal B$ except for the zero-th block, as required.
	\end{proof}

	Next, we show how balancing families enable a deterministic reduction to the block-balanced version, thus finally proving Theorem~\ref{thm:fine-3part}.
	\begin{proof}[\emph{Proof of Theorem~\ref{thm:fine-3part}}]
		Choose $b' := 3 \delta^{-3} (1 - 2\nu)^{-2}$.
		Let $b$ be the number and $\mathcal B$ be the $(\delta,n/b,\nu)$-balancing family obtained from Lemma~\ref{lem: construct-balancer} with these parameters,
		and let $r = n/b$.

		We show how to solve the partitioning problem for $\nu$-bounded sets with $|\mathcal B| 3^{\delta n}$ calls to its block-balanced version. The overall running time is $O(|\mathcal B| 3^{\delta n} \cdot {n \choose \lfloor\nu n\rfloor}^{1+\epsilon'}) = O({n \choose \lfloor\nu n\rfloor}^{1+\epsilon})$ with $\epsilon', \epsilon \to 0$ as $\delta \to 0$.
		
		The algorithm proceeds as follows:
		First iterate over all $(B_0,B_1,\ldots,B_s) \in \mathcal B$
		and all $3$-partitionings $(X_1,X_2,X_3)$ of $B_0$.
		In each iteration, keep only those sets $A$ in $\mathcal F_i$ with $A\cap B_0 = X_i$, for $i\in[3]$.
		Since $|B_0| \leq \delta n$, this incurs at most $|\mathcal B| \cdot 3^{\delta n}$ iterations.
		In each of them, use Lemma~\ref{lem: block-balanced partitioning} to solve the $(\delta,r$)-block-balanced partitioning problem with blocks $B_1,\ldots,B_s$, universe $[n] \setminus B_0$, and keeping only sets in the input families that are $(\delta,r)$-balanced onto the blocks $B_1,\ldots ,B_s$. The condition $n/r \geq 3/\delta$ required in Lemma~\ref{lem: block-balanced partitioning} is satisfied.
		Then answer positively if any of these tests succeeds.

		This procedure cannot create false positives.
		Given a valid partitioning $A_1 \dcup A_2 \dcup A_3$ of $[n]$ with sets of size $\leq \nu n$,
		the definition of $(\delta,r,\nu)$-balancing families implies that for at least one block partition in $(B_0,B_1,\ldots,B_s) \in \mathcal B$,
		every $A_i$ is $(\delta,r)$-balanced onto the blocks $B_1,\ldots,B_s$.
		The overall running time of this algorithm is $O(|\mathcal B| \cdot 3^{\delta n} \cdot {n \choose \lfloor\nu n\rfloor}^\kappa)$ with $\kappa \to 1$ as $\delta \to 0$, which is $O(|\mathcal B| \cdot {n \choose \lfloor\nu n\rfloor}^{1+\epsilon})$ with $\epsilon \to 0$ as $\delta \to 0$.
	\end{proof}

\subsection{Set Cover}
We next turn to prove Theorem~\ref{thm: setcover}. To get an intuition for the $1/4$ bound in the theorem, note that an instructive situation where the above algorithm fails is to find a $4$-cover of the universe $[4]$ from the family $\mathcal F =\bigl\{\{1\}, \{2\}, \{3\}, \{4\}\bigr\}$.

	\begin{proof}[Proof of Theorem~\ref{thm: setcover}]
	It suffices by Lemma~\ref{lem:balancedkcover} to show that any $t$-cover from a set family $\mathcal F$ of subsets each of size bounded by $\delta n$ for $\delta< 1/4$, is $(1/2-\kappa)$-balanced for some $\kappa>0$. This can be rephrased as a partition problem of a sequence of reals: Given $0<a_1,\ldots,a_t$ with $\sum_{i=1}^t a_i=1$, and $\forall i:a_i\leq \delta$, show that they can be partitioned into three sets $A_1,A_2$, and $A_3$, such that, for $i=1,2,3$, 
	\begin{equation}
	\label{eq: size-bound}
	\sum_{a\in A_i} a\leq \tfrac12-\kappa.
	\end{equation}
	
	We give a constructive proof. Let $w(A_i)=\sum_{a\in A_i} a$ and set $\kappa=(\tfrac12-2\delta)/3$. Observe that $\kappa>0$. Insert $a_1,\ldots, a_t$ in order greedily into the three sets $A_1,A_2,A_3$. That is, start by filling up $A_1$ until it is no longer possible to add the next element without violating the size bound~\eqref{eq: size-bound}, and continue filling up $A_2$ likewise, and finally $A_3$ with the remaining elements. Note that $\tfrac12-\kappa-w(A_1)\leq \delta$ and $\tfrac12-\kappa-w(A_2)\leq \delta$ because of the greedy procedure. Hence, $w(A_1)+w(A_2)\geq 1-2\delta-2\kappa$. This means that $w(A_3)=1-w(A_1)-w(A_2)\leq 2\delta+2\kappa=\tfrac12-\kappa$. Consequently, all three sets $A_1,A_2,A_3$ obey the size bound~\eqref{eq: size-bound}.
	\end{proof}

        \section{Conclusion}

	We have shown that, under the asymptotic rank conjecture, the chromatic number algorithm of Bj\"orklund, Husfeldt, and Koivisto~\cite{BjorklundHK2009} can be improved.
	However, the algorithm of \cite{BjorklundHK2009} solves the more general \emph{counting} problem of determining the number of $k$-colorings, and it computes the coefficients of the \emph{chromatic polynomial} of the input graph.
	We do not know how to count colorings or compute the chromatic polynomial faster under the asymptotic rank conjecture.
	\subsection*{Acknowledgements}
	We thank Nutan Limaye for inviting KP to ITU Copenhagen, where the authors first all met. We also thank Cornelius Brand for discussions in an early stage of the project.
	\bibliographystyle{abbrv}
	\bibliography{paper}

\begin{thebibliography}{10}

\bibitem{BeigelE2005}
R.~Beigel and D.~Eppstein.
\newblock 3-coloring in time ${O}(1.3289^n)$.
\newblock {\em Journal of Algorithms}, 54(2):168--204, 2005.

\bibitem{BjorklundH2006}
A.~Bj{\"{o}}rklund and T.~Husfeldt.
\newblock Exact algorithms for exact satisfiability and number of perfect
  matchings.
\newblock In M.~Bugliesi, B.~Preneel, V.~Sassone, and I.~Wegener, editors, {\em
  Automata, Languages and Programming, 33rd International Colloquium, {ICALP}
  2006, Venice, Italy, July 10-14, 2006, Proceedings, Part {I}}, volume 4051 of
  {\em Lecture Notes in Computer Science}, pages 548--559. Springer, 2006.

\bibitem{BjorklundHKK10}
A.~Bj{\"{o}}rklund, T.~Husfeldt, P.~Kaski, and M.~Koivisto.
\newblock Trimmed moebius inversion and graphs of bounded degree.
\newblock {\em Theory Comput. Syst.}, 47(3):637--654, 2010.

\bibitem{BjorklundHK2009}
A.~Bj{\"{o}}rklund, T.~Husfeldt, and M.~Koivisto.
\newblock Set partitioning via inclusion-exclusion.
\newblock {\em {SIAM} J. Comput.}, 39(2):546--563, 2009.

\bibitem{BjorklundK24}
A.~Bj{\"{o}}rklund and P.~Kaski.
\newblock The asymptotic rank conjecture and the set cover conjecture are not
  both true.
\newblock In B.~Mohar, I.~Shinkar, and R.~O'Donnell, editors, {\em Proceedings
  of the 56th Annual {ACM} Symposium on Theory of Computing, {STOC} 2024,
  Vancouver, BC, Canada, June 24-28, 2024}, pages 859--870. {ACM}, 2024.

\bibitem{BronK1973}
C.~Bron and J.~Kerbosch.
\newblock Finding all cliques of an undirected graph.
\newblock {\em Commun. ACM}, 16(9):575--577, sep 1973.

\bibitem{BurgisserCS2013}
P.~B\"{u}rgisser, M.~Clausen, and M.~A. Shokrollahi.
\newblock {\em Algebraic Complexity Theory}.
\newblock Springer Science \& Business Media, 2013.

\bibitem{Byskov2003}
J.~M. Byskov.
\newblock Algorithms for k-colouring and finding maximal independent sets.
\newblock In {\em Proceedings of the Fourteenth Annual {ACM-SIAM} Symposium on
  Discrete Algorithms, January 12-14, 2003, Baltimore, Maryland, {USA}}, pages
  456--457. {ACM/SIAM}, 2003.

\bibitem{Byskov2004}
J.~M. Byskov.
\newblock Enumerating maximal independent sets with applications to graph
  colouring.
\newblock {\em Operations Research Letters}, 32(6):547--556, 2004.

\bibitem{ByskovMS2005}
J.~M. Byskov, B.~A. Madsen, and B.~Skjernaa.
\newblock On the number of maximal bipartite subgraphs of a graph.
\newblock {\em Journal of Graph Theory}, 48(2):127--132, 2005.

\bibitem{ChristandlVZ2021}
M.~Christandl, P.~Vrana, and J.~Zuiddam.
\newblock Barriers for fast matrix multiplication from irreversibility.
\newblock {\em Theory Comput.}, 17:Paper No. 2, 32, 2021.

\bibitem{Christofides1971}
N.~Christofides.
\newblock An algorithm for the chromatic number of a graph.
\newblock {\em Comput. J.}, 14(1):38--39, 1971.

\bibitem{Clinch2024}
K.~Clinch, S.~Gaspers, A.~Saffidine, and T.~Zhang.
\newblock A piecewise approach for the analysis of exact algorithms, 2024.
\newblock arXiv:2402.10015.

\bibitem{ConnerGLV2022}
A.~Conner, F.~Gesmundo, J.~M. Landsberg, and E.~Ventura.
\newblock Rank and border rank of {K}ronecker powers of tensors and
  {S}trassen's laser method.
\newblock {\em Comput. Complexity}, 31(1):Paper No. 1, 40, 2022.

\bibitem{ConnerGLVW2021}
A.~Conner, F.~Gesmundo, J.~M. Landsberg, E.~Ventura, and Y.~Wang.
\newblock Towards a geometric approach to {S}trassen's asymptotic rank
  conjecture.
\newblock {\em Collect. Math.}, 72(1):63--86, 2021.

\bibitem{Eppstein2003}
D.~Eppstein.
\newblock Small maximal independent sets and faster exact graph coloring.
\newblock {\em J. Graph Algorithms Appl.}, 7(2):131--140, 2003.

\bibitem{Fomin2007}
F.~Fomin, S.~Gaspers, and S.~Saurabh.
\newblock Improved exact algorithms for counting 3- and 4-colorings.
\newblock In {\em International Computing and Combinatorics Conference}, 2007.

\bibitem{Gartenberg1985}
P.~A. Gartenberg.
\newblock {\em Fast Rectangular Matrix Multiplication}.
\newblock PhD thesis, University of California, Los Angeles, 1985.

\bibitem{Jukna2011}
S.~Jukna.
\newblock {\em Extremal Combinatorics}.
\newblock Texts in Theoretical Computer Science. An EATCS Series. Springer,
  Heidelberg, second edition, 2011.
\newblock With applications in computer science.

\bibitem{karppa2019probabilistic}
M.~Karppa and P.~Kaski.
\newblock Probabilistic tensors and opportunistic boolean matrix
  multiplication.
\newblock In {\em Proceedings of the Thirtieth Annual ACM-SIAM Symposium on
  Discrete Algorithms}, pages 496--515. SIAM, 2019.

\bibitem{Landsberg2012}
J.~M. Landsberg.
\newblock {\em Tensors: Geometry and Applications}, volume 128 of {\em Graduate
  Studies in Mathematics}.
\newblock American Mathematical Society, Providence, RI, 2012.

\bibitem{Lawler1976}
E.~L. Lawler.
\newblock A note on the complexity of the chromatic number problem.
\newblock {\em Inf. Process. Lett.}, 5(3):66--67, 1976.

\bibitem{Meijer2023}
L.~Meijer.
\newblock 3-coloring in time ${O}(1.3217^n)$, 2023.
\newblock arXiv:2302.13644.

\bibitem{MitzenmacherUpfal05}
M.~Mitzenmacher and E.~Upfal.
\newblock {\em Probability and Computing: Randomized Algorithms and
  Probabilistic Analysis}.
\newblock Cambridge University Press, 2005.

\bibitem{MoonM1965}
J.~Moon and L.~Moser.
\newblock On cliques in graphs.
\newblock {\em Israel J. Math.}, 3:23--28, 1965.

\bibitem{Pratt2024}
K.~Pratt.
\newblock A stronger connection between the asymptotic rank conjecture and the
  set cover conjecture.
\newblock In B.~Mohar, I.~Shinkar, and R.~O'Donnell, editors, {\em Proceedings
  of the 56th Annual {ACM} Symposium on Theory of Computing, {STOC} 2024,
  Vancouver, BC, Canada, June 24-28, 2024}, pages 871--874. {ACM}, 2024.

\bibitem{Schiermeyer93}
I.~Schiermeyer.
\newblock Deciding 3-colourability in less than ${O}(1.415^n)$ steps.
\newblock In J.~van Leeuwen, editor, {\em Graph-Theoretic Concepts in Computer
  Science, 19th International Workshop, {WG} '93, Utrecht, The Netherlands,
  June 16-18, 1993, Proceedings}, volume 790 of {\em Lecture Notes in Computer
  Science}, pages 177--188. Springer, 1993.

\bibitem{Strassen1986}
V.~Strassen.
\newblock The asymptotic spectrum of tensors and the exponent of matrix
  multiplication.
\newblock In {\em 27th Annual Symposium on Foundations of Computer Science,
  Toronto, Canada, 27-29 October 1986}, pages 49--54. {IEEE} Computer Society,
  1986.

\bibitem{Strassen1988}
V.~Strassen.
\newblock The asymptotic spectrum of tensors.
\newblock {\em J. Reine Angew. Math.}, 384:102--152, 1988.

\bibitem{Strassen1991}
V.~Strassen.
\newblock Degeneration and complexity of bilinear maps: some asymptotic
  spectra.
\newblock {\em J. Reine Angew. Math.}, 413:127--180, 1991.

\bibitem{Strassen1994}
V.~Strassen.
\newblock Algebra and complexity.
\newblock In {\em First {E}uropean {C}ongress of {M}athematics, {V}ol. {II}
  ({P}aris, 1992)}, volume 120 of {\em Progr. Math.}, pages 429--446.
  Birkh\"{a}user, Basel, 1994.

\bibitem{TomitaTT2006}
E.~Tomita, A.~Tanaka, and H.~Takahashi.
\newblock The worst-case time complexity for generating all maximal cliques and
  computational experiments.
\newblock {\em Theoretical Computer Science}, 363(1):28--42, 2006.
\newblock Computing and Combinatorics.

\bibitem{WigdersonZ2023}
A.~Wigderson and J.~Zuiddam.
\newblock Asymptotic spectra: {T}heory, applications and extensions.
\newblock Manuscript dated October 24, 2023; available at
  \url{https://staff.fnwi.uva.nl/j.zuiddam/papers/convexity.pdf}, 2023.

\bibitem{WuGJSX2024}
P.~Wu, H.~Gu, H.~Jiang, Z.~Shao, and J.~Xu.
\newblock {A Faster Algorithm for the 4-Coloring Problem}.
\newblock In T.~Chan, J.~Fischer, J.~Iacono, and G.~Herman, editors, {\em 32nd
  Annual European Symposium on Algorithms (ESA 2024)}, volume 308 of {\em
  Leibniz International Proceedings in Informatics (LIPIcs)}, pages
  103:1--103:18, Dagstuhl, Germany, 2024. Schloss Dagstuhl -- Leibniz-Zentrum
  f{\"u}r Informatik.

\bibitem{Yates1937}
F.~Yates.
\newblock {\em The Design and Analysis of Factorial Experiments}.
\newblock Imperial Bureau of Soil Science, 1937.

\bibitem{Zamir21}
O.~Zamir.
\newblock Breaking the $2^n$ barrier for 5-coloring and 6-coloring.
\newblock In N.~Bansal, E.~Merelli, and J.~Worrell, editors, {\em 48th
  International Colloquium on Automata, Languages, and Programming, {ICALP}
  2021, July 12-16, 2021, Glasgow, Scotland (Virtual Conference)}, volume 198
  of {\em LIPIcs}, pages 113:1--113:20. Schloss Dagstuhl - Leibniz-Zentrum
  f{\"{u}}r Informatik, 2021.

\bibitem{Zamir23}
O.~Zamir.
\newblock Algorithmic applications of hypergraph and partition containers.
\newblock In B.~Saha and R.~A. Servedio, editors, {\em Proceedings of the 55th
  Annual {ACM} Symposium on Theory of Computing, {STOC} 2023, Orlando, FL, USA,
  June 20-23, 2023}, pages 985--998. {ACM}, 2023.

\end{thebibliography}
\end{document}